\newcommand{\lm}{\lambda}
\newcommand{\lmd}{\lambda}  
\newcommand{\clm}{\Lambda}
\newcommand{\Lmd}{\Lambda}   
\newcommand{\p}{\partial}
\newcommand{\al}{\alpha}      
\newcommand{\afa}{\alpha}
\newcommand{\tM}{\tilde{M}}
\newcommand{\tL}{\tilde{L}}
\newcommand{\ve}{\varepsilon}
\newcommand{\mcalJ}{\mathcal{J}}
\newcommand{\mcalP}{\mathcal{P}}
\newcommand{\mcalQ}{\mathcal{Q}}
\newcommand{\vsp}{\vspace{0.8ex}}
\newcommand{\vvsp}{\vspace{0.5ex}}
\newcommand{\beq}{\begin{equation}}
\newcommand{\eeq}{\end{equation}}
\newcommand{\dtafrac}[2]{\frac{\delta#1}{\delta #2}}
\DeclareMathOperator{\res}{Res}
\DeclareMathOperator{\nd}{d\!}
\DeclareMathOperator{\td}{d\!}
\newtheorem{thm}{Theorem}[section]
\newtheorem{rmk}[thm]{Remark}
\newtheorem{lem}[thm]{Lemma}
\newtheorem{prop}[thm]{Proposition}
\newtheorem{defn}{Definition}[section]
\newtheorem{ex}{Example}[section]
\numberwithin{equation}{section}
\begin{document}

\title{ Tri-Hamiltonian Structure of the Ablowitz-Ladik Hierarchy}

\author{
{Shuangxing Li, Si-Qi Liu, Haonan Qu, Youjin Zhang}\\
{\small Department of Mathematical Sciences, Tsinghua University}\\
{\small Beijing 100084, P. R. China}
\date{}
}
\maketitle

\begin{abstract}
We construct a local tri-Hamiltonian structure of the Ablowitz-Ladik hierarchy, and compute the central invariants of the associated bihamiltonian structures. We show that the central invariants of one of the bihamiltonian structures are equal to $\frac1{24}$, and the dispersionless limit of this bihamiltonian structure coincides with the one that is defined on the jet space of the Frobenius manifold associated with the Gromov-Witten invariants of local $\mathbb{CP}^1$. This result provides support for the validity of Brini's conjecture on the relation of these Gromov-Witten invariants with the Ablowitz-Ladik hierarchy.
\vskip 0.3true cm
\noindent {\bf{Keywords}}\ \ Ablowitz-Ladik hierarchy; tri-Hamiltonian structure; bihamiltonian structure; central invariant; Gromov-Witten invariant
\end{abstract}

\tableofcontents
\section{Introduction}
The Ablowitz-Ladik equation is one of the most important differential-difference nonlinear integrable equations in soliton theory. It was obtained by Ablowitz and Ladik as a discretization of the nonlinear Schr\"odinger equation and was solved by the inverse scattering method in \cite{AL-95, AL-96}, and has close relations to some other important soliton equations such as the 2D Toda lattice equation \cite{Vekslerchik}, the relativistic Toda lattice equation \cite{Suris}, the Toeplitz lattice equation that appears in the study of random matrix models \cite{Adler-Moerbeke}. It also characterizes some elementary geometric properties of the motion of discrete curves on the $N$-dimensional spheres \cite{Doliwa}. The importance of the Ablowitz-Ladik equation is manifested in the recent study of the Gromov-Witten invariants of local $\mathbb{CP}^1$ \cite{Brini-1, Brini-2, Brini-3}. As it is conjectured by Brini in \cite{Brini-1} that the generating function of these Gromov-Witten invariants is given by a particular tau-function of the Ablowitz-Ladik hierarchy. The purpose of the present paper is to study the multi-Hamiltonian structure of the Abolowitz-Ladik hierarchy, and to answer the question raised by Brini and his collaborators in \cite{Brini-2, Brini-3} on the existence of a bihamiltonian structure of the Ablowitz-Ladik hierarchy.

The Ablowitz-Ladik equation is given by \cite{AL-96}
\[i U_{n,t}=U_{n+1}+U_{n-1}-2 U_n\pm U^*_n U_n (U_{n+1}+U_{n-1}),\quad n\in\mathbb{Z},\]
it is the compatibility condition of the discrete Zakharov-Shabat or AKNS spectral problem \cite{AKNS, AL-95, AL-96, Zakharov-Shabat}
\begin{equation}\label{x-part}
\Lambda\begin{pmatrix}\psi_1\\ \psi_2\end{pmatrix}=\begin{pmatrix}\lambda & q \\ r & \lambda^{-1}\end{pmatrix}\begin{pmatrix}\psi_1\\ \psi_2\end{pmatrix}
\end{equation}
and the evolution of the eigenfunctions
\begin{equation}\label{t-part}
\begin{pmatrix}\psi_1 \\ \psi_2 \end{pmatrix}_t=\begin{pmatrix}F & G \\ H & K\end{pmatrix}
\begin{pmatrix}\psi_1\\ \psi_2\end{pmatrix},
\end{equation}
where $\Lambda$ is the shift operator defined by $\Lambda(f)(n)=f(n+1)$, $\psi_1, \psi_2, q, r$ are functions of $n, t$ with
\[r(n,t)=\mp q^*(n,t)=U_n(t),\]
and
\begin{align*}
&F=i \lm^2-i q r^--i,\quad G=i q \lm-i q^- \lm^{-1},\\
&H=i r^- \lm-i r \lm^{-1},\quad K=i+i q^- r-i\lm^{-2}.
\end{align*}
Here and in what follows we use the notation
\[\Lambda(f)(n)=f(n+1)=f^+,\quad \Lambda^{-1}(f)(n)=f(n-1)=f^-.\]

For general independent unknown functions $q(n), r(n)$, we can
assume that the evolution \eqref{t-part} of the eigenfunctions is given the functions $F, G, H, K$ which are some appropriate polynomials in $\lm, \lm^{-1}$.
The coefficients of these polynomials can be obtained by the compatibility condition of \eqref{x-part} and \eqref{t-part}. This compatibility condition also yields a system of differential-difference equations for $q(n,t), r(n,t)$, and the set of differential-difference equations thus obtained is called the Ablowitz-Ladik hierarchy in \cite{Suris}.
The following three choices of the
evolutions \eqref{t-part} of the eigenfunctions are given in \cite{Suris} as elementary flows of the Ablowitz-Ladik hierarchy:
\begin{align}\label{elementary-f}
&\begin{pmatrix}\psi_1 \\ \psi_2 \end{pmatrix}_{t_{-1}}=\begin{pmatrix}0 & -q^{-}\lm^{-1} \\ -r \lm^{-1} & -\lm^{-2}+q^- r\end{pmatrix}
\begin{pmatrix}\psi_1\\ \psi_2\end{pmatrix},\quad
\begin{pmatrix}\psi_1 \\ \psi_2 \end{pmatrix}_{t_{0}}=\begin{pmatrix}-1 & 0 \\ 0 & 1\end{pmatrix}
\begin{pmatrix}\psi_1\\ \psi_2\end{pmatrix},\\
&\begin{pmatrix}\psi_1 \\ \psi_2 \end{pmatrix}_{t_{1}}=\begin{pmatrix}\lm^2-q r^- & q \lm \\ r^- \lm & 0\end{pmatrix}
\begin{pmatrix}\psi_1\\ \psi_2\end{pmatrix}.\label{elementary-pf}
\end{align}
The corresponding differential-difference equations are given by \cite{Suris}
\begin{equation}
\begin{cases} q_{t_{-1}}=q^- (1-q r)\\ r_{t_{-1}}=-r^+(1-q r)\end{cases},\quad
\begin{cases} q_{t_{0}}=-2 q\\ r_{t_0}=2 r\end{cases},
\quad
\begin{cases} q_{t_1}=q^+ (1-q r)\\ r_{t_1}=-r^-(1-q r)\end{cases}.
\end{equation}
Note that the flow $\frac{\p}{\p t}$ given by \eqref{t-part} can be represented as
\[\frac{\p}{\p t}=i \left(\frac{\p}{\p t_{-1}}+\frac{\p}{\p t_0}+\frac{\p}{\p t_{1}}\right).\]
We can obtain more elementary flows by replacing the r.h.s. of the
$t_{-1}$-evolution of \eqref{elementary-f} by higher order polynomials in $\lm^{-1}$,
or by replacing the r.h.s. of the $t_1$-evolution of \eqref{elementary-pf} by higher order polynomials in $\lm$, and we call them the negative and positive flows of the Ablowitz hierarchy respectively. These flows can in fact be generated by a recursion operator acting on the $t_0$-flow.

The flows of the Ablowitz-Ladik hierarchy can be represented as the Hamiltonian systems with Hamiltonian operator
\[\mathcal{P}=\begin{pmatrix} 0 & qr -1\\ 1-qr &0\end{pmatrix}.\]
The Hamiltonians of the flows $\frac{\p}{\p t_{-1}}$, $\frac{\p}{\p t_0}$, $\frac{\p}{\p t_1}$ are given by
\[H_{-1}=\sum_{k\in\mathbb{Z}} q(k) r(k+1),\quad H_0=\sum_{k\in\mathbb{Z}}\log(1-q(k) r(k)),\quad
H_1=\sum_{k\in\mathbb{Z}} q(k+1) r(k).\]

A bihamiltonian structure is considered in \cite{Ercolani}  for the Ablowitz-Ladik hierarchy with the first Hamiltonian structure given above.
However, the second Hamiltonian structure given there is nonlocal. The main purpose of the present paper is to study the local multi-Hamiltonian structure of Ablowitz-Ladik hierarchy by using a pair of new unknown functions
\begin{equation}\label{pq-def}
P=\frac{q}{q^-}, \quad Q=\frac{q}{q^-}(1-q^-\,r^-),
\end{equation}
and to show that in terms of these unknown functions this integrable hierarchy possesses a local tri-Hamiltonian structure $(\mathcal{P}_1, \mathcal{P}_2, \mathcal{P}_3)$. An important property of this tri-Hamiltonian structure is that the bihamiltonian structure $(\mathcal{P}_1, \mathcal{P}_2)$ contained in it possesses a  dispersionless limit which coincides with the bihamiltonian structure of hydrodynamic type that is defined on the jet space of a generalized Frobenius manifold \cite{Brini-2}. This Frobenius manifold is almost dual to the one that is associated to the Gromov-Witten invariants of local $\mathbb{CP}^1$ \cite{Brini-1, Brini-2, Brini-3}. We show that all the central invariants \cite{DLZ-2006, LZ-2005} of the bihamiltonian structure $(\mathcal{P}_1, \mathcal{P}_2)$ equal to $\frac1{24}$. Together with the results of classification of infinitesimal deformations of semisimple bihamiltonian structures of hydrodynamic type given in \cite{DLZ-2006, LZ-2005}, this fact on the central invariants of the bihamiltonian structure $(\mathcal{P}_1, \mathcal{P}_2)$ suggests that the Ablowitz-Ladik hierarchy should be the topological deformation of its dispersionless limit, and it provides a strong support for the validity of Brini's conjecture on the close relation between the Ablowitz-Ladik hierarchy and the Gromov-Witten invariants of local $\mathbb{CP}^1$ \cite{DLZ-2006, DLZ-2018, LZ-2005}. We also compute the central invariants of other bihamiltonian structures $(\mathcal{P}_i, \mathcal{P}_j)$ that are contained in the tri-Hamiltonian structure, and find that the central invariants of the bihamiltonian structure $(\mathcal{P}_3, \mathcal{P}_2)$ are equal to $-\frac1{24}$. This fact together with properties of the leading terms of $(\mathcal{P}_3, \mathcal{P}_2)$ implies that this bihamiltonian structure is equivalent to $(\mathcal{P}_1, \mathcal{P}_2)$,  and it also leads us to the construction of a B\"acklund transformation of the Ablowitz-Ladik equation. Part of the results of this paper is presented in \cite{Li}.

We organize the paper as follows. In Sec.\,2 we derive the positive and negative flows of the Ablowitz-Ladik hierarchy represented in terms of the unkown functions $P, Q$. In Sec.\,3 and Sec.\, 4, we show that the Ablowitz-Ladik hierarchy has a local tri-Hamiltonian structure. In Sec.\,5, we compute the central invariants of the bihamiltonian structures that are given by the tri-Hamiltonian structure. In Sec.\,6 and Sec.\,7 we study the relation between the positive and negative flows of the Ablowitz-Ladik hierarchy, and establish an equivalence relation between
the bihamiltonian structures $(\mathcal{P}_3, \mathcal{P}_2)$ and $(\mathcal{P}_1, \mathcal{P}_2)$. Sec.\,8 is a conclusion.

\section{The Ablowitz-Ladik hierarchy}
Let us first note that the spectral problem \eqref{x-part} can be equivalently represented as
\begin{equation}\label{sp-1}
\psi_1^+-\left(\lambda+\frac{P}{\lambda}\right)\psi_1+Q \psi_1^{-}=0,
\end{equation}
here the functions $P, Q$ are defined by \eqref{pq-def}.
Denote $\phi=\lm^{n-1} \psi_1$, then we can rewrite \eqref{sp-1} in the form
\beq\label{sp-2}
L\phi=z\phi,\quad z=\lm^2,
\eeq
where
\begin{align}
L&=(1-Q\clm^{-1})^{-1}(\clm-P)\notag\\
&=\clm+Q-P+Q(\clm-Q)^{-1}(Q-P)\notag\\
&=\clm+Q-P+Q(Q^--P^-)\clm^{-1}+QQ^{-}(Q^{--}-P^{--})\clm^{-2}+\dots.\label{op-L}
\end{align}
\begin{defn}
We call the hierarchy of differential-difference equations given by the
Lax equations
\begin{align}
\frac{\p L}{\p t_k}&=\frac1{(k+1)!}[(L^{k+1})_+, L],\quad k\ge 0,\label{al-def}\\
\frac{\p L}{\p s_k}&=\frac1{(k+1)!}[(M^{k+1})_-, L],\quad k\ge 0,\label{al-def-n}
\end{align}
the Ablowitz-Ladik hierarchy, where the operators $L$ and $M$
are defined respectively by \eqref{op-L} and
\begin{align}
M&=(\clm-P)^{-1}(1-Q\clm^{-1})\notag\\
&=-\frac1{P}\left(1-\clm \frac1{P}\right)^{-1}(1-Q\clm^{-1})\\
&=\frac{Q}{P}\clm^{-1}+\frac{Q^+}{P P^+}-\frac1{P}+\left(\frac{Q^{++}}{PP^+P^{++}}-\frac1{PP^+}\right)\clm+\dots.\label{op-tL}
\end{align}
We also call the flows $\frac{\p}{\p t_k}$ and $\frac{\p}{\p s_k}$
the positive and negative flows of the Ablowitz-Ladik hierarchy
respectively.
\end{defn}
It was shown in \cite{Brini-2} that the Lax equations \eqref{al-def}, \eqref{al-def-n}
are well-defined and they yield a hierarchy of differential-difference equations
for the unknown functions $P, Q$. Indeed, if we denote
\beq\label{def-AB}
A=\clm-P,\quad B=1-Q\clm^{-1},
\eeq
then
\[L=B^{-1}A,\quad M=A^{-1}B,\]
and the Lax equations \eqref{al-def}, \eqref{al-def-n} are equivalent to the following equations \cite{Brini-2}
\begin{align}
\frac{\p A}{\p t_k}&=\frac1{(k+1)!}\left((\tL^{k+1})_+A-A (L^{k+1})_+\right),\label{eq-A}\\
\frac{\p B}{\p t_k}&=\frac1{(k+1)!}\left((\tL^{k+1})_+B-B (L^{k+1})_+\right),\label{eq-B}\\
\frac{\p A}{\p s_k}&=\frac1{(k+1)!}\left((\tilde{M}^{k+1})_-A-A (M^{k+1})_-\right),\label{eq-C}\\
\frac{\p B}{\p s_k}&=\frac1{(k+1)!}\left((\tilde{M}^{k+1})_-B-B (M^{k+1})_-\right),\label{eq-D}
\end{align}
where
\[\tL=A B^{-1},\quad \tilde{M}=B A^{-1}.\]
By using the identities
\begin{align*}
(\tL^{k+1})_+A-A (L^{k+1})_+
&=-(\tL^{k+1})_-A+A (L^{k+1})_-,\\
(\tL^{k+1})_+B-B (L^{k+1})_+
&=-(\tL^{k+1})_-B+B (L^{k+1})_-,\\
(\tilde{M}^{k+1})_-A-A (M^{k+1})_-
&=-(\tilde{M}^{k+1})_+A+A (M^{k+1})_+,\\
(\tilde{M}^{k+1})_-B-B (\tL^{k+1})_-
&=-(\tilde{M}^{k+1})_+B+B (M^{k+1})_+,
\end{align*}
it is easy to see that equations \eqref{eq-A}-\eqref{eq-D} yield   differential-difference equations for the functions $P, Q$.
If we denote the operators $L^{k}, \tL^{k}$ by
\begin{align*}
L^{k}&=\sum_{l\ge 0} a^{k}_l \clm^{k-l},\quad
\tL^{k}=\sum_{l\ge 0} b^{k}_l \clm^{k-l},\\
M^{k}&=\sum_{l\ge 0} c^{k}_l \clm^{-k+l},\quad
\tilde{M}^{k}=\sum_{l\ge 0} d^{k}_l \clm^{-k+l},
\end{align*}
then the flows of the Ablowitz-Ladik hierarchy can be represented by
\begin{align*}
P_{t_k}&=\frac1{(k+1)!}\left(b^{k+1}_{k+1}-a^{k+1}_{k+1}\right)P,\quad
Q_{t_k}=\frac1{(k+1)!}\left(b^{k+1}_{k+1}-\clm^{-1}(a^{k+1}_{k+1})\right)Q,\\
P_{s_k}&=\frac1{(k+1)!}\left(\clm(c^{k+1}_{k})-d^{k+1}_{k}\right),\quad
Q_{s_k}=\frac1{(k+1)!}\left(c^{k+1}_k-d^{k+1}_k)\right).
\end{align*}

By using the identities
\begin{align*}
&A L^k=\tL^k A=\tL^{k+1}B=B L^{k+1},\\
&B M^k=\tM^k B=\tM^{k+1} A=A M^{k+1},
\end{align*}
we obtain the following relations among the coefficients $a^k_l, b^k_l, c^k_l, d^k_l$:
\begin{align}
&\clm(a^k_{l+1})-P a^k_l= b^k_{l+1}-b^k_l \clm^{k-l}(P)\notag\\
=\,&b^{k+1}_{l+1}-b^{k+1}_l \clm^{k+1-l}(Q)
=a^{k+1}_{l+1}-Q\clm^{-1}(a^{k+1}_l).\label{re-coef}\\
&c^k_l-Q\clm^{-1}(c^k_{l+1})=d^k_l-\clm^{-k+l+1}(Q) d^k_{l+1}\notag\\
=\,&d^{k+1}_l-\clm^{-k+l}(P) d^{k+1}_{l+1}
=\clm(c^{k+1}_l)-P c^{k+1}_{l+1}.\label{re-coef-m}
\end{align}
Starting from the initial conditions
\begin{align*}
&a^k_0=b_0^k=1,\quad a_l^0=b^0_l=0,\quad k\ge 0,\, l\ge 1,\\
&c^0_0=d^0_0=1,\quad c_l^0=d^0_l=0,\quad l\ge 1,\\
&c^k_0=\prod_{i=0}^{k-1} \clm^{-i}\Bigl(\frac{Q}{P}\Bigr),
\quad d^k_0=\prod_{i=0}^{k-1} \clm^{-i}\Bigl(\frac{Q}{P^-}\Bigr),
\quad c_l^0=d^0_l=0,\quad k\ge 1,
\end{align*}
we can obtain $a^k_l, b^k_l, c^k_l, d^k_l$ uniquely from \eqref{re-coef} and \eqref{re-coef-m}. For example, we have
\begin{align*}
&a^1_1=Q-P, \quad b^1_1=Q^+-P,\\
&a^2_1=Q^++Q-P^+-P,\quad b^2_1=Q^{++}+Q^+-P^+-P,\\
&a^1_2=Q(Q^--P^-),\quad b^1_2=Q(Q^+-P),\\
&a^2_2=Q(Q+Q^--P-P^-)+Q Q^+-P Q^+-PQ+P^2,\\
&b^2_2=Q^+(Q^{++}+Q^+-P^+-P)+Q Q^+-PQ^+-PQ+P^2.
\end{align*}
and
\begin{align*}
&c^1_1=\frac{Q^+}{P P^+}-\frac1{P}, \quad d^1_1=\frac{Q}{P P^-}-\frac1{P},\\
&c^2_1=\frac1{P}\left(\frac{QQ^+}{PP^+}-\frac{Q}{P}+\frac{Q^2}{PP^-}-\frac{Q}{P^-}\right),\\
&d^2_1=\frac1{P^-}\left(\frac{QQ^-}{P^-P^{--}}-\frac{Q}{P^-}+\frac{Q^2}{PP^-}-\frac{Q}{P}\right),\\
&c^1_2=\frac1{P}\left(\frac{Q^{++}}{P^+ P^{++}}-\frac1{P^+}\right),\quad d^1_2=\frac1{P^+}\left(\frac{Q}{P P^-}-\frac1{P}\right).
\end{align*}
\begin{ex}
The first and the second positive flows of the Ablowitz-Ladik hierarchy have the expressions
\begin{align}
P_{t_0}&=P(Q^+-Q),\quad Q_{t_0}=Q (Q^+-Q^--P+P^-).\label{exm-1}\\
P_{t_1}&=\frac12 P(PQ-PQ^++P^-Q-P^+Q^++Q^+Q^{++}+Q^+Q^+-QQ^--Q^2),\label{exm-2}\\
Q_{t_1}&=\frac12Q(P^2-P^-P^--P^+Q^+-2PQ^+-PQ+P^-Q+2P^-Q^-+P^{--}Q^-\label{exm-3}\\
&\quad +Q^+Q^{++}+Q^+Q^++QQ^+-Q^-Q^{--}-QQ^--Q^-Q^-),\label{exm-4}
\end{align}
and the negative flows read
\begin{align}
P_{s_0}&=\frac{Q^+}{P^+}-\frac{Q}{P^-},\quad Q_{s_0}=\frac{Q}{P}-\frac{Q}{P^-},\label{exm-5}\\
P_{s_1}&=\frac1{2P^+}\left(\frac{Q^+Q^{++}}{P^+P^{++}}-\frac{Q^+}{P^+}+\frac{Q^+Q^+}{PP^+}-\frac{Q^+}{P}\right)\\
&\quad-\frac1{2P^-}\left(\frac{QQ^-}{P^-P^{--}}-\frac{Q}{P^-}+\frac{Q^2}{PP^-}-\frac{Q}{P}\right),\label{exm-6}\\
Q_{s_1}&=\frac1{2P}\left(\frac{QQ^+}{PP^+}-\frac{Q}{P}+\frac{Q^2}{PP^-}\right)
-\frac1{2P^-}\left(\frac{QQ^-}{P^-P^{--}}-\frac{Q}{P^-}+\frac{Q^2}{PP^-}\right).\label{exm-7}
\end{align}
\end{ex}

\section{Hamiltonian structure of the Ablowitz-Ladik hierarchy}
In this section, we give the Hamiltonian structure of the Ablowitz-Ladik hierarchy and derive a recursion relation among its flows.
This recursion relation leads to a tri-Hamiltonian structure of the Ablowitz-Ladik hierarchy, as we will show in the next section.

\begin{thm}\label{thm-1}
The flows of the Ablowitz-Ladik hierarchy can be represented by the following Hamiltonian systems:
\beq\label{al-ham-pq}
\begin{pmatrix} P_{t_k}\\ Q_{t_k}\end{pmatrix}=\mathcal{P}_1\begin{pmatrix} \frac{\delta H_k}{\delta P} \vsp\\ \frac{\delta H_k}{\delta Q}\end{pmatrix},\quad
\begin{pmatrix} P_{s_k}\\ Q_{s_k}\end{pmatrix}=\mathcal{P}_1\begin{pmatrix} \frac{\delta G_k}{\delta P}\vsp\\ \frac{\delta G_k}{\delta Q}\end{pmatrix},\quad k\ge 0,
\eeq
where the Hamiltonian operator ${\mathcal P}_1$ is given by
\beq\label{tpo-1}
\mathcal{P}_1
=\begin{pmatrix} Q\clm^{-1}-\clm Q& (1-\clm)Q\vsp\\ Q(\clm^{-1}-1)& 0\end{pmatrix},
\eeq
and the Hamiltonians have the expressions
\[H_k=\sum_{n\in\mathbb{Z}} h_k(n), \quad \textrm{with}\quad h_k=\frac1{(k+2)!}\res L^{k+2},\quad k\ge -1,\]
and
\[G_0=-\sum_{n\in\mathbb{Z}} \log P(n), \ G_k=\sum_{n\in\mathbb{Z}} g_k(n), \ \textrm{with}\ g_k=\frac1{k(k+1)!}\res M^k,\quad k\ge 1.\]
Here, for an operator $K=\sum_{j\in\mathbb{Z}} a_j \clm^j$ we define
$\res K=a_0$, and the variational derivatives of a functional $H=\sum_{n\in\mathbb{Z}} h(n)$ are defined by
\[\frac{\delta H}{\delta P(n)}=\sum_{l\in\mathbb{Z}} \clm^{-l}\left(\frac{\p h(n)}{\p P(n+l)}\right),\quad \frac{\delta H}{\delta Q(n)}=\sum_{l\in\mathbb{Z}} \clm^{-l}\left(\frac{\p h(n)}{\p Q(n+l)}\right).\]
\end{thm}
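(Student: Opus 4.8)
The plan is to establish the Hamiltonian representation \eqref{al-ham-pq} by verifying three things: that $\mathcal{P}_1$ is a genuine Hamiltonian operator, that the stated functionals $H_k$, $G_k$ really are conserved densities of the Lax hierarchy, and that applying $\mathcal{P}_1$ to their variational derivatives reproduces the flows $P_{t_k},Q_{t_k}$ and $P_{s_k},Q_{s_k}$ written out via the coefficients $a^k_l,b^k_l,c^k_l,d^k_l$ in Section 2. Skew-symmetry of $\mathcal{P}_1$ is immediate from the form \eqref{tpo-1} (note $(Q\clm^{-1}-\clm Q)^*=-(Q\clm^{-1}-\clm Q)$ and the off-diagonal blocks are mutual negative adjoints), and the Jacobi identity can be checked by a direct Schouten-bracket computation on the two-component difference operator, or more economically by exhibiting $\mathcal{P}_1$ as the push-forward under the change of variables \eqref{pq-def} of the known Hamiltonian operator $\mathcal{P}=\begin{pmatrix}0 & qr-1\\ 1-qr & 0\end{pmatrix}$ for the $(q,r)$ representation; since a Miura-type (difference-rational) transformation preserves the Poisson property, this reduces Jacobi to the already-established case. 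I would adopt the second route.

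Next I would identify the variational derivatives of the trace functionals. The standard fact for Lax operators of the form $L=B^{-1}A$ is that $\delta\bigl(\tfrac1{(k+2)!}\res L^{k+2}\bigr)$ is computed from $\res\bigl(L^{k+1}\,\delta L\bigr)$ up to total differences, and since $\delta L$ is expressed through $\delta P$, $\delta Q$ via $L=(1-Q\clm^{-1})^{-1}(\clm-P)$, one extracts $\tfrac{\delta H_k}{\delta P}$ and $\tfrac{\delta H_k}{\delta Q}$ as particular residues of $L^{k+1}$ conjugated appropriately — concretely these will be built from the coefficients $a^{k+1}_l,b^{k+1}_l$. The same computation with $M=A^{-1}B$ gives $\tfrac{\delta G_k}{\delta P},\tfrac{\delta G_k}{\delta Q}$ in terms of $c^{k+1}_l,d^{k+1}_l$, with the logarithmic case $G_0=-\sum\log P$ handled separately (it corresponds to the $t_0$/$s_0$ normalization and can be checked by hand against \eqref{exm-1} and \eqref{exm-5}). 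The somewhat delicate point here is bookkeeping the shifts: the definition of $\tfrac{\delta}{\delta P(n)}$ involves summing $\clm^{-l}$ applied to $\tfrac{\partial h(n)}{\partial P(n+l)}$, so one must keep careful track of which argument each $a,b,c,d$ coefficient is evaluated at.

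Finally I would multiply out $\mathcal{P}_1\begin{pmatrix}\delta H_k/\delta P\\ \delta H_k/\delta Q\end{pmatrix}$ and compare with the flow formulas $P_{t_k}=\tfrac1{(k+1)!}(b^{k+1}_{k+1}-a^{k+1}_{k+1})P$, $Q_{t_k}=\tfrac1{(k+1)!}(b^{k+1}_{k+1}-\clm^{-1}(a^{k+1}_{k+1}))Q$ from Section 2, and similarly for the negative flows; the required identities reduce, after using the block structure of $\mathcal{P}_1$, to the recursion relations \eqref{re-coef} and \eqref{re-coef-m} among the coefficients together with telescoping of total differences under $\sum_{n\in\mathbb{Z}}$. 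The main obstacle I anticipate is precisely this last matching: showing that the combination of residues coming from the variational derivative, once hit by the operator $Q\clm^{-1}-\clm Q$ and $Q(\clm^{-1}-1)$, collapses to the single coefficients $a^{k+1}_{k+1},b^{k+1}_{k+1}$. This is where one must invoke \eqref{re-coef}, \eqref{re-coef-m} in an essential way — the ``adjoint'' identities $AL^k=\tL^kA=\tL^{k+1}B=BL^{k+1}$ are the algebraic engine that converts the two-term residue expression into a telescoping sum whose only surviving term is the top coefficient. I would first verify the whole scheme explicitly for $k=0$ against \eqref{exm-1}–\eqref{exm-7}, then run the general argument by induction on $k$ using the recursions.
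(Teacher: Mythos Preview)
Your preferred route for the Jacobi identity --- exhibiting $\mathcal{P}_1$ as the push-forward of $\mathcal{P}=\begin{pmatrix}0 & qr-1\\ 1-qr & 0\end{pmatrix}$ under the map \eqref{pq-def} --- does not work. In the $(q,r)$ bracket one has $\{q(n),q(m)\}=0$ for all $n,m$, and since $P=q/q^-$ depends on $q$ alone, this forces $\{P(n),P(m)\}=0$. But the $(1,1)$ entry of $\mathcal{P}_1$ is $Q\Lambda^{-1}-\Lambda Q\neq 0$, so $\mathcal{P}_1$ is simply not the image of $\mathcal{P}$ under that change of variables. (There is also the structural issue you glossed over: \eqref{pq-def} is not invertible, so ``push-forward'' would in any case require a reduction argument, not a Miura transformation.) The paper handles this step by a \emph{different} change of variables, $v^1=Q-P$, $v^2=\log Q$, under which $\mathcal{P}_1$ becomes the constant operator $\begin{pmatrix}0&\Lambda-1\\ 1-\Lambda^{-1}&0\end{pmatrix}$; this is an invertible transformation on the $(P,Q)$ space itself, so the Hamiltonian property transfers directly.

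For the main verification you propose to extract the individual coefficients $a^{k+1}_l,b^{k+1}_l$, feed them into $\mathcal{P}_1$, and telescope using the recursions \eqref{re-coef}, \eqref{re-coef-m} together with an induction on $k$. The paper takes a more efficient, induction-free route: it first obtains closed residue formulae
\[
\frac{\delta H_k}{\delta P}=-\frac1{(k+1)!}\res\!\bigl(L^{k+1}B^{-1}\bigr),\qquad
\frac{\delta H_k}{\delta Q}=\frac1{(k+1)!}\res\!\bigl(\Lambda^{-1}L^{k+2}B^{-1}\bigr),
\]
and then shows that the Hamiltonian equations \eqref{al-ham-pq} are equivalent to two operator identities (the paper's \eqref{iden-1}, \eqref{iden-2}) between residues of $L^{k+1}$, $\tilde L^{k+1}$ and $L^{k+1}B^{-1}$. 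Those identities are proved for all $k$ at once by inserting $B^{-1}B$ or $BB^{-1}$ and using $A L^k=\tilde L^k A$; no coefficient extraction or recursion on $k$ is needed. Your coefficient/induction scheme could in principle be made to work, but it is more laborious and you should be aware that the ``collapse to $a^{k+1}_{k+1},b^{k+1}_{k+1}$'' you anticipate is precisely what the operator identities deliver cleanly, whereas at the level of coefficients it becomes a tangle of shifted sums.
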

\begin{proof}
We first note that the operator $\mathcal{P}_1$ is a Hamiltonian operator, since if we introduce the unknown functions
\begin{equation}\label{vpq}
v^1=Q-P, \quad v^2=\log Q
\end{equation}
and denote
\[J=\begin{pmatrix}\frac{\p u}{\p P} &\frac{\p u}{\p Q} \vvsp\\
\frac{\p v}{\p P} &\frac{\p v}{\p Q}\end{pmatrix}
=\begin{pmatrix}-1 &1 \\
0 &\frac1{Q}\end{pmatrix},
\]
then in terms of the new unknown functions we have the constant Hamiltonian operator
\[J\mathcal{P}_1J^T=\begin{pmatrix}0 & \clm-1 \vsp\\
1-\clm^{-1}&0\end{pmatrix}.\]

To derive the Hamiltonian formalism \eqref{al-ham-pq} of the Ablowitz-Ladik hierarchy, let us introduce the equivalence relation $\sim$ on the space of
differential 1-forms
\[\omega=\sum_{l\in\mathbb{Z}} \left(f_l \delta P^{(l)}+g_l \delta Q^{(l)}\right),\]
where $f_l, g_l$ are smooth functions of
$P^{(k)}(n)=P(n+k), Q^{(k)}(n)=Q(n+k)$ $(k\in\mathbb{Z})$. We say that two 1-forms are equivalent if their difference can be represented as the action of $\clm-1$ on another 1-form.
Then we have
\begin{align*}
\delta h_k&=\sum_{l\in\mathbb{Z}}\left(\frac{\p h_k}{\p P^{(l)}} \delta P^{(l)}+\frac{\p h_k}{\p Q^{(l)}} \delta Q^{(l)}\right)\\
&\sim \frac{\delta H_k}{\delta P}\delta P+\frac{\delta H_k}{\delta Q}\delta Q.
\end{align*}
On the other hand, from the definition of $h_k$ we have
\begin{align*}
\delta h_k&\sim \frac1{(k+1)!}\res L^{k+1}\delta L
=\frac1{(k+1)!}\res L^{k+1}\left(B^{-1}\delta Q \clm^{-1}L-B^{-1}\delta P\right)\\
&\sim\frac1{(k+1)!}\res\left(\delta Q \clm^{-1} L^{k+2} B^{-1}-\delta P L^{k+1} B^{-1}\right).
\end{align*}
From the above relation it follows that
\begin{align}
\frac{\delta H_k}{\delta P}=-\frac1{(k+1)!}\res\left(L^{k+1} B^{-1}\right),\quad
\frac{\delta H_k}{\delta Q}=\frac1{(k+1)!}\res\left(\clm^{-1} L^{k+2} B^{-1}\right).\label{vH}
\end{align}
In order to prove that the positive flows of the Ablowitz-Ladik hierarchy \eqref{al-def} has the Hamiltonian representation given in \eqref{al-ham-pq}, we need to verify that
\begin{align*}
&\frac1{(k+1)!}\left((\tL^{k+1})_+A-A (L^{k+1})_+\right)=\left(\clm Q-Q\clm^{-1}\right)\left(\frac{\delta H_k}{\delta P}\right)
+\left(\clm-1\right)Q\left(\frac{\delta H_k}{\delta Q}\right),\\
&\frac1{(k+1)!}\left((\tL^{k+1})_+B-B (L^{k+1})_+\right)
=Q(1-\clm^{-1}) \left(\frac{\delta H_k}{\delta P}\right)\clm^{-1}.
\end{align*}
\begin{rmk}
For a difference (or differential) operator $\mathcal{W}$ and a function $f$, we use the notation $\mathcal{W}(f)$ to denote the function obtained by the action of $\mathcal{W}$ on $f$. We understand the expression $\mathcal{W} f$ as the product of two operators $\mathcal{W}$ and $f$ (view it as an operator).
\end{rmk}
\noindent By using the idenities
\begin{align*}
&(\tL^{k+1})_+A-A (L^{k+1})_+
=\res\left(\clm L^{k+1}\right)
-\res\left(\tL^{k+1}\clm\right),\\
&(\tL^{k+1})_+B-B (L^{k+1})_+
=Q\clm^{-1}\res\left(L^{k+1}\right)
-\res\left(\tL^{k+1}\right)Q\clm^{-1}
\end{align*}
and \eqref{vH} we are led to prove the identities
\begin{align}
&\res\left(\clm L^{k+1}\right)-\res\left(\tL^{k+1}\clm\right)\notag\\
=\,&-(\clm Q-Q\clm^{-1})\left(\res\left(L^{k+1}B^{-1}\right)\right)+(\clm-1)Q\left(\res\left(\clm^{-1} L^{k+2} B^{-1}\right)\right),\label{iden-1}\\
\,&\clm \left(\res\left(\tL^{k+1}\right)\right)-\res\left(L^{k+1}\right)\notag\\
=\,&(\clm-1)\left(\res\left(L^{k+1}B^{-1}\right)\right).\label{iden-2}
\end{align}

\begin{lem}
The identities \eqref{iden-1}, \eqref{iden-2} hold true.
\end{lem}
\begin{proof}
By using the definition \eqref{def-AB}, we can rewrite the left hand side of \eqref{iden-2} as follows:
\begin{align*}
&\clm \left(\res\left(B B^{-1}\tL^{k+1}\right)\right)-\res\left(L^{k+1}B^{-1}B\right)\\
=\,&\clm\left(\res\left(L^{k+1}B^{-1}\right)\right)-\clm Q\left(\res\left(\clm^{-1}L^{k+1}B^{-1}\right)\right)\\
&\quad-\res\left(L^{k+1}B^{-1}\right)+\res\left(L^{k+1}B^{-1}Q\clm^{-1}\right)\\
=\,&(\clm-1)\left(\res\left(L^{k+1}B^{-1}\right)\right),
\end{align*}
so the identity \eqref{iden-2} holds true. In a similar way, we can rewrite the left hand side of \eqref{iden-1} as follows:
\begin{align}
&\res\left(\clm L^{k+1}B^{-1}B\right)-\res\left(BB^{-1}\tL^{k+1}\clm\right)\notag\\
=\,&\left[\res\left(\clm L^{k+1}B^{-1}\right)-\clm Q\left(\res\left(L^{k+1}B^{-1}\right)\right)\right]\notag\\
&\quad-\left[\res\left(L^{k+1}B^{-1}\clm\right)-Q\clm^{-1}\left(\res\left(L^{k+1}B^{-1}\right)\right)\right]\notag\\
=\,&-(\clm Q-Q\clm^{-1})\left(\res\left(L^{k+1}B^{-1}\right)\right)
+\res\left(AL^{k+1}B^{-1}\right)-\res\left(L^{k+1}B^{-1}A\right)\notag\\
=\,&-(\clm Q-Q\clm^{-1})\res\left(L^{k+1}B^{-1}\right)
+\res\left(\tL^{k+2}\right)-\res\left(L^{k+2}\right).\label{iden-1-a}
\end{align}
On the other hand, by using the identity \eqref{iden-2}
we have
\begin{align}
&(\clm-1)Q\left(\res\left(\clm^{-1} L^{k+2} B^{-1}\right)\right)\notag\\
=\,&-(\clm-1)\left(\res\left(BL^{k+2} B^{-1}\right)\right)+(\clm-1)\left(\res\left(L^{k+2} B^{-1}\right)\right)\notag\\
=\,&-(\clm-1)\left(\res\left(\tL^{k+2}\right)\right)
+\clm \left(\res\left(\tL^{k+2}\right)\right)-\res\left(L^{k+2}\right)\notag\\
=\,&\res\left(\tL^{k+2}\right)
-\res\left(L^{k+2}\right).\label{iden-1-b}
\end{align}
Then the identity \eqref{iden-1} follows from the relations
\eqref{iden-1-a} and \eqref{iden-1-b}. The lemma is proved.
\end{proof}
We thus proved that the positive flows of the Ablowitz-Ladik hierarchy
have the Hamiltonian formalism given in \eqref{al-ham-pq}. We can prove in a similar way the Hamiltonian formalism of the negative flows of the Ablowitz-Ladik hierarchy. The theorem is proved.
\end{proof}

\begin{rmk}
The Hamiltonian structure of the Ablowitz-Ladik hierarchy given in the above theorem was obtained in \cite{Brini-2, Brini-3} by using the
Hamiltonian structure of the 2D Toda lattice hierarchy. Here we give a
direct proof of this result.
\end{rmk}

Now let us introduce the operator
\beq\label{tpo-2}
\mathcal P_2=\begin{pmatrix}0 & P(\clm-1) Q\vsp\\ Q(1-\clm^{-1}) P & Q(\clm-\clm^{-1}) Q\end{pmatrix}.
\eeq
Then we have the following theorem.

\begin{thm}\label{thm-2}
The Ablowitz-Ladik hierarchy can also be represented in the following form:
\beq\label{al-ham-pq-2}
\begin{pmatrix} P_{t_k}\\ Q_{t_k}\end{pmatrix}=\frac1{k+1}\mathcal{P}_2\begin{pmatrix} \frac{\delta H_{k-1}}{\delta P}\vsp\\ \frac{\delta H_{k-1}}{\delta Q}\end{pmatrix},
\quad \begin{pmatrix} P_{s_k}\\ Q_{s_k}\end{pmatrix}=(k+2)\mathcal{P}_2\begin{pmatrix} \frac{\delta G_{k+1}}{\delta P}\vsp\\ \frac{\delta G_{k+1}}{\delta Q}\end{pmatrix},
\quad k\ge 0,
\eeq
\end{thm}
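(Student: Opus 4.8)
The plan is to establish \eqref{al-ham-pq-2} by reducing it, via Theorem \ref{thm-1}, to an operator identity relating $\mathcal{P}_1$ and $\mathcal{P}_2$. Since Theorem \ref{thm-1} already expresses the flows as $\mathcal{P}_1$ applied to $\delta H_k$ (resp.\ $\delta G_k$), it suffices to exhibit a \emph{recursion relation} of the form
\[
\mathcal{P}_2\begin{pmatrix}\frac{\delta H_{k-1}}{\delta P}\vsp\\ \frac{\delta H_{k-1}}{\delta Q}\end{pmatrix}
=\frac{k+1}{1}\,\mathcal{P}_1\begin{pmatrix}\frac{\delta H_{k}}{\delta P}\vsp\\ \frac{\delta H_{k}}{\delta Q}\end{pmatrix},
\qquad
\mathcal{P}_2\begin{pmatrix}\frac{\delta G_{k+1}}{\delta P}\vsp\\ \frac{\delta G_{k+1}}{\delta Q}\end{pmatrix}
=\frac{1}{(k+2)^2}\,\mathcal{P}_1\begin{pmatrix}\frac{\delta G_{k}}{\delta P}\vsp\\ \frac{\delta G_{k}}{\delta Q}\end{pmatrix},
\]
after matching the combinatorial prefactors $\frac1{(k+1)!}$ that appear in the Lax definitions. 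In fact the cleanest route is to work directly with the residue formulas \eqref{vH} for $\delta H_k/\delta P$, $\delta H_k/\delta Q$ in terms of $\res(L^{k+1}B^{-1})$ and $\res(\clm^{-1}L^{k+2}B^{-1})$, together with the analogous formulas for $\delta G_k$ in terms of $\res(M^{k}A^{-1}\cdots)$, which one derives in the same way as \eqref{vH} using $\delta M = -A^{-1}\delta P\, M + A^{-1}\delta Q\,\clm^{-1}$ and $\res M^k \delta M \sim \frac1{k}\delta(\res M^k)$-type manipulations.

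The key computational steps, in order: (1) Derive the variational formulas for $\delta G_k$ parallel to \eqref{vH}. (2) Plug these residue expressions into $\mathcal{P}_2$ applied to $\delta H_{k-1}$ and expand $\mathcal{P}_2$ using \eqref{tpo-2}; the entries involve $P(\clm-1)Q$, $Q(1-\clm^{-1})P$, $Q(\clm-\clm^{-1})Q$ acting on residues. (3) Use the operator algebra $A=\clm-P$, $B=1-Q\clm^{-1}$, $L=B^{-1}A$, $\tL=AB^{-1}$, together with the identities $AL^k=\tL^kA=\tL^{k+1}B=BL^{k+1}$, to rewrite the resulting residues of shifted operators. This is exactly the mechanism already used to prove \eqref{iden-1}, \eqref{iden-2}: inserting $BB^{-1}$ or $AA^{-1}$ in a residue and commuting the shift factors past $B^{-1}$ produces the $(\clm-1)$-type telescoping that converts $L^{k+1}$-residues into $L^{k+2}$-residues. (4) Match the result against $(k+1)\,\mathcal{P}_1$ applied to $\delta H_k$, expanded via \eqref{tpo-1} and \eqref{vH}; the $1/(k+1)!$ factors combine to leave the stated $\frac1{k+1}$ and $(k+2)$ coefficients. (5) Verify the negative-flow case by the symmetric argument with $M$, $\tM$, $A^{-1}$ in place of $L$, $\tL$, $B^{-1}$, noting that the $G_k$ carry an extra $1/k$ in their definition which is what produces the $(k+2)$ rather than $1/(k+2)$.

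The main obstacle I expect is step (3)–(4): tracking precisely how the multiplication operators $P$, $Q$ sandwiching the shift operators in $\mathcal{P}_2$ interact with the $B^{-1}$ (or $A^{-1}$) tails of the residue expressions, since $B^{-1}=\sum_{j\ge 0}(Q\clm^{-1})^j$ is an infinite series and one must justify that only finitely many terms contribute to each residue and that the bookkeeping of shifts is consistent. A secondary subtlety is the degenerate index: the case $k=0$ of the first relation involves $\delta H_{-1}$, and the case $k=0$ of the second involves $G_0=-\sum\log P$ whose variational derivative is not given by the generic residue formula $g_k=\frac1{k(k+1)!}\res M^k$, so that boundary case must be checked separately, directly from $\mathcal{P}_2$ applied to $(\delta G_1/\delta P,\delta G_1/\delta Q)^T$ against the explicit $t_0$- and $s_0$-flows \eqref{exm-1}, \eqref{exm-5}. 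Once the operator identity is in hand, the theorem follows immediately by combining it with Theorem \ref{thm-1}.
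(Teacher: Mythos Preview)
Your approach is essentially the paper's: plug the residue formulas \eqref{vH} for $\delta H_{k-1}$ into $\mathcal{P}_2$ and reduce to residue identities via the operator algebra of $A$, $B$, $L$, $\tL$ (the paper matches directly against the flow expressions $\res(\clm L^{k+1})-\res(\tL^{k+1}\clm)$ rather than against $(k+1)\,\mathcal{P}_1\,\delta H_k$, but by Theorem~\ref{thm-1} these coincide, so the computation is the same). Two bookkeeping slips to fix: the negative-flow recursion should be $\mathcal{P}_2\,\delta G_{k+1}=\tfrac{1}{k+2}\,\mathcal{P}_1\,\delta G_k$, not $\tfrac{1}{(k+2)^2}$; and your boundary worry is misplaced---for $k=0$ the second relation in \eqref{al-ham-pq-2} involves $G_{k+1}=G_1$, which is covered by the generic residue formula, so $G_0$ never enters this theorem.
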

\begin{proof}
From the definition \eqref{eq-A}, \eqref{eq-B} of the positive flows of the Ablowitz-Ladik hierarchy and the formulae of the variational derivatives given in \eqref{vH} it follows that, in order to prove the representation of the positive flows given in \eqref{al-ham-pq-2}, we only need to prove the following identities:
\begin{align}
&\res\left(\clm L^{k+1} \right)-\res\left(\tL^{k+1}\clm \right)\notag\\
=\,&P(1-\clm)Q \left(\res\left(\clm^{-1} L^{k+1}B^{-1}\right)\right),\label{iden-2-1}\\
&\res\left(\tL^{k+1}\right)-\clm^{-1}\left(\res\left(L^{k+1}\right)\right)\notag\\
=\,&(\clm^{-1}-1) P\left(\res\left(L^k B^{-1}\right)\right)+(\clm-\clm^{-1})Q\left(\res\left(\clm^{-1}L^{k+1} B^{-1}\right)\right).\label{iden-2-2}
\end{align}
By using the definition of the operators $L, \tL$  we have
\begin{align*}
&\res\left(\clm L^{k+1} \right)-\res\left(\tL^{k+1}\clm \right)\\
=\,&\res\left((A+P) L^{k+1} \right)-\res\left(\tL^{k+1}(A+P) \right)\\
=\,&P\res\left(L^{k+1} \right)-P\res\left(\tL^{k+1}\right)\\
=\,&P\res\left(L^{k+1}B^{-1} B\right)-P\res\left(B B^{-1}\tL^{k+1}\right)\\
=\,&-P\res\left(L^{k+1}B^{-1} Q\clm^{-1}\right)+P\res\left(Q\clm^{-1} B^{-1}\tL^{k+1}\right)\\
=\,&-P\clm Q\left(\res\left(\clm^{-1}L^{k+1}B^{-1}\right)\right)+PQ\left(\res\left(\clm^{-1}L^{k+1}B^{-1}\right)\right)\\
=\,&P(1-\clm)Q \left(\res\left(\clm^{-1} L^{k+1}B^{-1}\right)\right),
\end{align*}
so the identity \eqref{iden-2-1} holds true.

To prove the identity \eqref{iden-2-2}, we first first note that its right hand side can be written as
\begin{align}
&(\clm^{-1}-1) \left(\res\left(L^k B^{-1}\clm\right)-\res\left(L^{k+1}\right)\right)\notag\\
&\quad +(\clm-\clm^{-1})\left(\res\left(L^{k+1} B^{-1}\right)-\res\left(\tL^{k+1}\right)\right).\label{zh-2-1}
\end{align}
By using the identity \eqref{iden-2} and the fact that
\begin{align*}
&(\clm^{-1}-1)\left(\res\left(L^k B^{-1}\clm\right)\right)\\
=\,&\clm^{-1}\left(\res\left(L^k B^{-1}\clm\right)\right)-\clm^{-1}\left(\res\left(\clm L^k B^{-1}\right)\right)\\
=\,&\clm^{-1}\left(\res\left(L^k B^{-1}(\clm-P)\right)\right)-\clm^{-1}\left(\res\left((\clm-P) L^k B^{-1}\right)\right)\\
=\,&\clm^{-1}\left(\res\left(L^{k+1}\right)\right)-\clm^{-1}\left(\res\left(\tL^{k+1} \right)\right),
\end{align*}
we can represent \eqref{zh-2-1} in the form
\begin{align*}
&\clm^{-1}\left(\res\left(L^{k+1}\right)\right)-\clm^{-1}\left(\res\left(\tL^{k+1} \right)\right)
-(\clm^{-1}-1)\left(\res\left(L^{k+1}\right)\right)\\
&\quad+(\clm+1)\left(\res\left(\tL^{k+1}\right)\right)-(1+\clm^{-1})\left(\res\left(L^{k+1}\right)\right)\\
&\quad -(\clm-\clm^{-1})\left(\res\left(\tL^{k+1}\right)\right)\\
=\,&\res\left(\tL^{k+1}\right)-\clm^{-1}\left(\res\left(L^{k+1}\right)\right).
\end{align*}
Thus the identity \eqref{iden-2-2} also holds true, and the representation \eqref{al-ham-pq-2} for the positive flows of the Ablowitz-Ladik hierarchy holds true. The proof of \eqref{al-ham-pq-2} for the negative flows is similar, so we omit it here. The theorem is proved.
\end{proof}

Define the operator
\beq\label{recur-op}
\mathcal{R}=\mathcal P_2\circ \mathcal P_1^{-1},
\eeq
then from \eqref{tpo-1} and \eqref{tpo-2} we can obtain its explicit expression
\[
\mathcal{R}=\left(
\begin{array}{cc}
-P & P\left(Q-\Lambda Q \Lambda\right)\left(1-\Lambda\right)^{-1}Q^{-1} \vsp\\
-Q\left(1+\Lambda^{-1}\right) & Q\left(\left(\Lambda-1\right)P^-+\left(1+\Lambda^{-1}\right)\left(Q-\Lambda Q \Lambda\right)\right)\left(1-\Lambda\right)^{-1}Q^{-1}
\end{array}
\right).
\]

By using Theorem \ref{thm-1} and Theorem \ref{thm-2} we arrive at the following proposition.

\begin{prop}
The Ablowitz-Ladik hierarchy satisfies the following recursion relation:
\beq
\begin{pmatrix} P_{t_k}\\ Q_{t_k}\end{pmatrix}=\frac1{k+1}\mathcal{R}\begin{pmatrix} P_{t_{k-1}}\\ Q_{t_{k-1}}\end{pmatrix},\quad
\begin{pmatrix} P_{s_k}\\ Q_{s_k}\end{pmatrix}=\frac1{k+1}\mathcal{R}^{-1}\begin{pmatrix} P_{s_{k-1}}\\ Q_{s_{k-1}}\end{pmatrix},
\quad k\ge 1
\eeq
with the recursion operator $\mathcal{R}$ defined by \eqref{recur-op}.
\end{prop}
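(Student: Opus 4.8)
The plan is to read the recursion relation off directly from the two Hamiltonian representations already established, with essentially no new computation. For the positive flows I would combine the $(k-1)$-st instance of the first formula of \eqref{al-ham-pq},
\[
\begin{pmatrix} P_{t_{k-1}}\\ Q_{t_{k-1}}\end{pmatrix}=\mathcal{P}_1\begin{pmatrix} \frac{\delta H_{k-1}}{\delta P}\vsp\\ \frac{\delta H_{k-1}}{\delta Q}\end{pmatrix},
\]
with the $k$-th instance of the first formula of \eqref{al-ham-pq-2},
\[
\begin{pmatrix} P_{t_k}\\ Q_{t_k}\end{pmatrix}=\frac1{k+1}\,\mathcal{P}_2\begin{pmatrix} \frac{\delta H_{k-1}}{\delta P}\vsp\\ \frac{\delta H_{k-1}}{\delta Q}\end{pmatrix}.
\]
Eliminating the common gradient between these and using that $\mathcal{R}=\mathcal{P}_2\circ\mathcal{P}_1^{-1}$ satisfies $\mathcal{R}\circ\mathcal{P}_1=\mathcal{P}_2$ gives $\begin{pmatrix} P_{t_k}\\ Q_{t_k}\end{pmatrix}=\frac1{k+1}\mathcal{R}\begin{pmatrix} P_{t_{k-1}}\\ Q_{t_{k-1}}\end{pmatrix}$ for $k\ge1$. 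The negative flows are treated identically: the $(k-1)$-st instance of the second formula of \eqref{al-ham-pq-2} reads $\begin{pmatrix} P_{s_{k-1}}\\ Q_{s_{k-1}}\end{pmatrix}=(k+1)\,\mathcal{P}_2\left(\frac{\delta G_k}{\delta P},\ \frac{\delta G_k}{\delta Q}\right)^{T}$ while the $k$-th instance of the second formula of \eqref{al-ham-pq} reads $\begin{pmatrix} P_{s_k}\\ Q_{s_k}\end{pmatrix}=\mathcal{P}_1\left(\frac{\delta G_k}{\delta P},\ \frac{\delta G_k}{\delta Q}\right)^{T}$; eliminating the gradient of $G_k$ and using $\mathcal{P}_1\circ\mathcal{P}_2^{-1}=\mathcal{R}^{-1}$ yields the second recursion.

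The one point requiring justification is that the operator identities $\mathcal{R}\circ\mathcal{P}_1=\mathcal{P}_2$ and $\mathcal{R}^{-1}\circ\mathcal{P}_2=\mathcal{P}_1$ hold as identities of matrix pseudo-difference operators, since neither $\mathcal{P}_1$ nor $\mathcal{P}_2$ is invertible as a map between genuine function spaces; with these identities in hand one never has to invert $\mathcal{P}_1$ or $\mathcal{P}_2$ on flows. I would verify $\mathcal{R}\,\mathcal{P}_1=\mathcal{P}_2$ directly from the explicit form \eqref{recur-op} of $\mathcal{R}$ — which involves only $(1-\clm)^{-1}$ and multiplication by $Q^{-1}$ — by a finite operator manipulation, and likewise write down $\mathcal{R}^{-1}$ and check $\mathcal{R}^{-1}\mathcal{P}_2=\mathcal{P}_1$. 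The formal factor $(1-\clm)^{-1}$ then takes care of itself: in the chain
\[
\mathcal{R}\begin{pmatrix} P_{t_{k-1}}\\ Q_{t_{k-1}}\end{pmatrix}
=\mathcal{R}\,\mathcal{P}_1\begin{pmatrix}\frac{\delta H_{k-1}}{\delta P}\\ \frac{\delta H_{k-1}}{\delta Q}\end{pmatrix}
=\mathcal{P}_2\begin{pmatrix}\frac{\delta H_{k-1}}{\delta P}\\ \frac{\delta H_{k-1}}{\delta Q}\end{pmatrix}
=(k+1)\begin{pmatrix} P_{t_k}\\ Q_{t_k}\end{pmatrix},
\]
the output is a genuine local flow of the hierarchy, and similarly for $\mathcal{R}^{-1}$ acting on $\begin{pmatrix} P_{s_{k-1}}\\ Q_{s_{k-1}}\end{pmatrix}$. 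A small bookkeeping remark to record is that the $k=1$ case of the negative recursion uses $\delta G_1/\delta P$ and $\delta G_1/\delta Q$, which are legitimate by the $k=0$ instance of \eqref{al-ham-pq-2}, so no value of $k\ge1$ is lost.

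Accordingly, the only real obstacle is this bookkeeping around the formal inverses — confirming the operator identities $\mathcal{R}\mathcal{P}_1=\mathcal{P}_2$ and $\mathcal{R}^{-1}\mathcal{P}_2=\mathcal{P}_1$ and that the singular factor $(1-\clm)^{-1}$ always lands on the image of $1-\clm$ — after which the proposition is an immediate corollary of Theorem~\ref{thm-1} and Theorem~\ref{thm-2}. I would close by noting that this statement also pins down the tri-Hamiltonian data tightly: all positive flows are generated from $\frac{\p}{\p t_0}$ by iterating $\mathcal{R}$ and all negative flows from $\frac{\p}{\p s_0}$ by iterating $\mathcal{R}^{-1}$, which is exactly the structure exploited in the later sections.
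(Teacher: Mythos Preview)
Your proposal is correct and matches the paper's approach exactly: the paper simply states that the proposition follows ``by using Theorem~\ref{thm-1} and Theorem~\ref{thm-2}'' together with the definition $\mathcal{R}=\mathcal{P}_2\circ\mathcal{P}_1^{-1}$, with no further argument given. Your additional care about the formal inverse $(1-\Lambda)^{-1}$ and the bookkeeping on the range of $k$ is more explicit than what the paper records, but the underlying logic is identical.
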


By using the recursion operator we can also define the following operator
\beq\label{tpo-3}
\mathcal P_3=\mathcal{R}\circ\mathcal P_2,
\eeq
it has the following expression:
\begin{align}\label{operator-K}
\mathcal P_3=&\left(
\begin{array}{cc}
P\left(Q\Lambda^{-1}-\Lambda Q\right)P & K_{12} \vsp\\
K_{21} & K_{22}
\end{array}
\right),
\end{align}
where
\begin{align*}
K_{12}=\,& P\left(\left(Q-\Lambda Q \Lambda\right)\left(1+\Lambda^{-1}\right)-P\left(1-\Lambda\right)\right)Q, \\
K_{21}=\,& Q\left(\left(\Lambda+1\right)\left(\Lambda^{-1}Q\Lambda^{-1}-Q\right)+\left(1-\Lambda^{-1}\right)P\right)P, \\
K_{22}=\,& Q\left(1+\Lambda^{-1}\right)\left(Q-\Lambda Q \Lambda\right)\left(1+\Lambda^{-1}\right)Q \\
\,&+Q\left(1+\Lambda^{-1}\right)P\left(\Lambda-1\right)Q+Q\left(\Lambda-1\right)P^-\left(1+\Lambda^{-1}\right)Q
\\
=\,&
  Q\left(
    \left(1+\Lambda^{-1}\right)\left(Q-\Lambda Q \Lambda\right)\left(1+\Lambda^{-1}\right)
   +2(P\Lmd-\Lmd^{-1}P)
  \right)Q.
\end{align*}
Then the Ablowitz-Ladik hierarchy can also be represented in the form
\begin{align}\label{al-ham-pq-3}
&\begin{pmatrix} P_{t_0}\\ Q_{t_0}\end{pmatrix}=\mathcal{P}_3\begin{pmatrix} \frac{\delta G_0}{\delta P}\vsp\\ \frac{\delta G_0}{\delta Q}\end{pmatrix},
\quad \begin{pmatrix} P_{t_k}\\ Q_{t_k}\end{pmatrix}=\frac1{k(k+1)}\mathcal{P}_3\begin{pmatrix} \frac{\delta H_{k-2}}{\delta P}\vsp\\ \frac{\delta H_{k-2}}{\delta Q}\end{pmatrix},
\quad k\ge 1;
\\
&\begin{pmatrix} P_{s_k}\\ Q_{s_k}\end{pmatrix}=(k+2)(k+3)\mathcal{P}_3\begin{pmatrix} \frac{\delta G_{k+2}}{\delta P}\vsp\\ \frac{\delta G_{k+2}}{\delta Q}\end{pmatrix},\quad k\ge 0.
\end{align}

In the next section, we are to show that the operators $\mathcal{P}_2$, $\mathcal{P}_3$ are Hamiltonian operators, and the Hamiltonian operators $\mathcal{P}_1$, $\mathcal{P}_2$, $\mathcal{P}_3$ are compatible. Thus the
Ablowitz-Ladik hierarchy is a hierarchy of tri-Hamiltonian systems.

\section{A tri-Hamiltonian structure of the Ablowitz-Ladik hierarchy}
We are to use the notion of Schouten bracket defined on the space of local functionals of a super manifold
$\hat{M}$ to show that $\mathcal P_1, \mathcal P_2, \mathcal P_3$
form a tri-Hamiltonian structure, here $M$ is a smooth manifold of dimension $m$, and $\hat{M}$ is obtained from the cotangent bundle of $M$ with its fiber's parity reversed, see \cite{LZ-2011, LZ-2013} for details.

Let $\hat{U}=U\times \mathbb{R}^{0|m}$ be a local trivialization of $\hat{M}$, $u^1,\dots, u^m$ be coordinates on $U$, and $\theta_1,\dots,\theta_m$ be the dual coordinates (or the super variables) on $\mathbb{R}^{0|m}$. Then on the infinite jet space $J^\infty(\hat M)$ we have local
coordinates
$\{u^{\al, s},\theta_\al^s\mid \al=1,\dots,m, s\ge 0\}$ with $u^{\al,0}=u^\al$, $\theta_\al^0=\theta_\al$. The ring of differential polynomials
$\hat{\mathcal{A}}$ is locally given by
\[C^\infty(\hat{U})[[u^{\al,s}, \theta_\al^s\mid \al=1,\dots,m, s\ge 0]],\]
and the space of local functionals of $\hat{M}$ is defined by the quotient space
\[\hat{\mathcal{F}}=\hat{\mathcal{A}}/\p \hat{\mathcal{A}},\]
where $\p$ is the vector field on $J^\infty(\hat M)$ given by
\[\p=\sum_{s\ge 0}\left(u^{\al,s+1}\frac{\p}{\p u^{\al,s}}+\theta_\al^{s+1}\frac{\p}{\p\theta_\al^s}\right).\]
We denote elements of $\hat{\mathcal{F}}$ by $\int f\nd x$ with $f\in \hat{\mathcal{A}}$.

Now we can define the Schouten bracket on the space of local functionals of $\hat{M}$ as follows:
\[[H, G]=\int \left(\frac{\delta H}{\delta \theta_\al}\frac{\delta G}{\delta u^\al}+(-1)^p\frac{\delta H}{\delta u^\al}\frac{\delta G}{\delta \theta_\al}\right)\nd x\]
with the variational derivatives
\[\frac{\delta H}{\delta u^\al}=\sum_{s\ge 0}(-\p)^s\frac{\p h}{\p u^{\al,s}},\quad \frac{\delta H}{\delta \theta_\al}=\sum_{s\ge 0}(-\p)^s\frac{\p h}{\p\theta_\al^s}.\]
Here we assume that the super degree of $H$ is $p$ which is defined by
\[\deg \theta_\al^s=s, \quad \deg u^{\al,s}=\deg f=0\]
for any $f\in C^\infty(M)$.
Then an operator
\[\mathcal{P}=\Biggl(\sum_{s\ge 0} P_s^{\al\beta}({\mathbf u}, {\mathbf u}_x,\dots) \p_x^s\Biggr)\]
is a Hamiltonian operator iff it is anti-symmetric and the associated local functional
\[I=\frac12 \int \theta_\al \mathcal{P}^{\al\beta}_s\p^s \theta_\beta\nd x\]
satisfies the condition $[I, I]=0$.

In order to adapt the above setting to the discrete case,
we introduce the variables
$u^1(x), u^2(x)$ by applying an $\ve$-interpolation to the discrete variables $P(n), Q(n)$ so that
\[P(n):=u^1(x)|_{x=n\ve},\quad Q(n)=u^2(x)|_{x=n\ve}.\]
Then the shift operator $\clm$ can be represented as $\clm=e^{\ve\p_x}$, i.e.
\[\clm \left(u^\al\right)(x)=u^\al(x+\ve),\quad \al=1,2,\]
and the operators $\mathcal P_1$, $\mathcal P_2$ defined in \eqref{tpo-1} and
\eqref{tpo-2} can be rewritten as
\begin{align}
&\mathcal{P}_1
=\begin{pmatrix} u^2(x) e^{-\ve\p_x}-e^{\ve\p_x} u^2(x)& (1-e^{\ve\p_x})u^2(x)\vsp\\ u^2(x)(e^{-\ve\p_x}-1)& 0\end{pmatrix},\label{tpo-1-x}\\
&\mathcal{P}_2
=\begin{pmatrix} 0& u^1(x)(e^{\ve\p_x}-1)u^2(x)\vsp\\ u^2(x)(1-e^{-\ve\p_x})u^1(x)& u^2(x)(e^{\ve\p_x}-e^{-\ve\p_x}) u^2(x)\end{pmatrix}.\label{tpo-2-x}
\end{align}
The operator $\mathcal P_3$ has a similar expressions obtained by the substitution
\[P\to u^1,\ Q\to u^2, \ \clm\to e^{\ve\p_x}\]
in \eqref{operator-K}.

It is easy to see that if the density $h$ of the local functionals $H=\int h\nd x$
can be represented as a function depend only on $u^\al, \clm^s(u^\al),
\theta_\al, \clm^s(\theta_\al)$ with $\al=1,\dots,m$ and $s\in\mathbb{Z}$, then its variational derivatives can be represented as
\[\frac{\delta H}{\delta u^\al}=\sum_{s\in\mathbb{Z}}\clm^{-s}\left(\frac{\p h}{\p \clm^s(u^{\al})}\right),\quad \frac{\delta H}{\delta \theta_\al}=\sum_{s\in\mathbb{Z}}\clm^{-s}\left(\frac{\p h}{\p\clm^s(\theta_\al)}\right).\]

To simplify our notation, in what follows we will also denote $u^1, u^2$ by $u_1, u_2$ respectively and
\[u_\al^\pm=\clm^{\pm 1} \left(u^\al\right),\quad \theta_\al^{\pm}=\clm^{\pm 1}\left(\theta_\al\right),\quad \al=1,2.\]

\begin{ex}
The local functional $I$ corresponding to the operator $\mathcal{P}_1$ has the expression
\begin{align*}
  I=&\frac12\int
  (\theta_1,\theta_2)
  \begin{pmatrix} u_2\clm^{-1}-\clm u_2& (1-\clm)u_2\\ u_2(\clm^{-1}-1)& 0\end{pmatrix}
  \begin{pmatrix}
    \theta_1 \\
    \theta_2
  \end{pmatrix}\nd x\\
=&
  \frac12\int\left(
   -u_2^+\theta_1\theta_1^+
   +u_2\theta_1\theta_1^-
   -u_2^+\theta_1\theta_2^+
   +2u_2\theta_1\theta_2
   -u_2\theta_1^-\theta_2
  \right)\nd x.
\end{align*}
Its variational derivatives are given by
\begin{align*}
\dtafrac{I}{\theta_1}
&=
   u_2\theta_1^--u_2^+\theta_1^++u_2\theta_2-u_2^+\theta_2^+
\\
 &=(u_2\clm^{-1}-\clm u_2)\left(\theta_1\right)+(1-\clm)(u_2\theta_2),\\
\dtafrac{I}{\theta_2}
 &=u_2(\theta_1^--\theta_1)=u_2(\clm^{-1}-1)\left(\theta_1\right),\\
\dtafrac{I}{u_1}&=0,\\
\dtafrac{I}{u_2}&=\theta_1\theta_1^--\theta_1^-\theta_2+\theta_1\theta_2.
\end{align*}
By using that fact that $\theta_\al\theta_\al=\theta_\al^-\theta_\al^-=0$,
$\theta^-_\afa\theta_\afa+\theta_\afa\theta^-_\afa=0$,
we can calculate the Schouten bracket $[I,I]$ as follows:
\begin{align*}
  [I,I]
&=
  2\int
  \left(
   \dtafrac{I}{\theta_1}
   \dtafrac{I}{u_1}
  +\dtafrac{I}{\theta_2}
   \dtafrac{I}{u_2}
  \right)\td x
\\
&=
  2\int u_2(\theta_1^--\theta_1)
  (\theta_1\theta_1^--\theta_1^-\theta_2+\theta_1\theta_1)\td x
\\
&=
 2\int u_2(\theta_1^-\theta_1\theta_2+\theta_1\theta_1^-\theta_2)\td x=0.
\end{align*}
Thus we know that $\mcalP_1$ is a Hamiltonian operator. Note that we already showed this fact in the last section by applying a change of variables
to the operator $\mathcal{P}_1$.
\end{ex}

Now we use this method to show
that the operators $\mathcal{P}_2$, $\mathcal{P}_3$
in \eqref{tpo-2}, \eqref{tpo-3}
are Hamiltonian operators, and the Hamiltonian operators $\mathcal{P}_1$, $\mathcal{P}_2$, $\mathcal{P}_3$ are compatible.
Let $J, K$ be the local functionals corresponding to
$\mcalP_2, \mcalP_3$ respectively, i.e.
\begin{equation}\label{J}
\begin{split}
  J
&=\frac12\int
  ({\theta_1},{\theta_2})
  \begin{pmatrix}
    0 & {u_1}(\clm-1){u_2} \vsp\\
    {u_2}(1-\clm^{-1}){u_1} & {u_2}(\clm-\clm^{-1}){u_2}
  \end{pmatrix}
  \begin{pmatrix}
    {\theta_1} \\
    {\theta_2}
  \end{pmatrix}\nd x\\
&=
  \frac12\int
  \left(
   -2{u_1}{u_2}\theta_1\theta_2
  +u_1^-{u_2}\theta_1^-\theta_2
  +{u_1}u_2^+\theta_1\theta_2^+
  -{u_2}u_2^-\theta_2\theta_2^-
  +{u_2}u_2^+\theta_2\theta_2^+
  \right)\nd x,
\end{split}
\end{equation}
and 
\begin{equation}\label{K}
\begin{split}
  K&=
  \frac{1}{2}
  \int({\theta_1}, {\theta_2})
  \left(
\begin{array}{cc}
{u_1}\left({u_2}\Lambda^{-1}-\Lambda {u_2}\right){u_1} & K_{12} \vsp\\
K_{21} & K_{22}
\end{array}
\right)
  \begin{pmatrix}
    {\theta_1} \\
    {\theta_2}
  \end{pmatrix}\td x\\
&=
  \int\Bigl(
    -u_1^-{u_1}{u_2}\theta_1^-\theta_1
\\
&\quad 
    +\left(
      {u_1}{u_2}{u_2}\theta_1\theta_2
     +{u_1}u_2^-{u_2}\theta_1\theta_2^-
     -{u_1}u_2^+u_2^+\theta_1\theta_2^+
     -{u_1}u_2^+u_2^{++}\theta_1\theta_2^{++}
    \right)
\\
&\quad 
   +\left(
     {u_1}{u_1}u_2^+\theta_1\theta_2^+
    -{u_1}{u_1}{u_2}\theta_1\theta_2
   \right)
   +2{u_1}{u_2}u_2^+\theta_2\theta_2^+
\\
&\quad 
   -\left(
     {u_2}u_2^+({u_2}+u_2^+)\theta_2\theta_2^+
    +{u_2}u_2^+u_2^{++}\theta_2\theta_2^{++}
   \right)
  \Bigr)\td x.
\end{split}
\end{equation}
We first show that $(\mcalP_1, \mcalP_2)$ is a bihamiltonian structure.
\begin{lem}
The operator $\mcalP_2$ defined in \eqref{tpo-2}
is a Hamiltonian operator which is compatible with the Hamiltonian operator $\mcalP_1$.
\end{lem}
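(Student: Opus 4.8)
The plan is to use the super-variable/Schouten-bracket criterion set up above. First note that $\mcalP_2$ in \eqref{tpo-2} is skew-adjoint: the $(1,2)$ entry $u_1(\clm-1)u_2$ and the $(2,1)$ entry $u_2(1-\clm^{-1})u_1$ are negatives of each other's formal adjoints, the $(2,2)$ entry $u_2(\clm-\clm^{-1})u_2$ is itself skew-adjoint, and the $(1,1)$ entry vanishes. Hence, by the criterion recalled above, it suffices to prove the two identities
\[
  [J,J]=0,\qquad [I,J]=0,
\]
where $I$ and $J$ are the bivectors attached to $\mcalP_1$ and $\mcalP_2$, with $J$ written out in \eqref{J}. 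The first says that $\mcalP_2$ is a Hamiltonian operator; given this and $[I,I]=0$ (checked in the example above), the expansion $[I+cJ,I+cJ]=[I,I]+2c[I,J]+c^2[J,J]=2c[I,J]$ shows that the second is precisely the statement that $\mcalP_1$ and $\mcalP_2$ are compatible.

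Concretely, I would first read off the variational derivatives of $J$ from its monomial expansion \eqref{J}, using $\delta/\delta u^\al=\sum_s\clm^{-s}\,\partial/\partial\clm^s(u^\al)$ and the analogous formula for $\theta_\al$; this gives $\dtafrac{J}{u_1},\dtafrac{J}{u_2}$ (quadratic in the $\theta$'s) and $\dtafrac{J}{\theta_1},\dtafrac{J}{\theta_2}$ (linear). Some care is needed here, since a monomial such as $u_2u_2^+\theta_2\theta_2^+$ feeds into $\dtafrac{J}{u_2}$ both through its $u_2$ slot and, after applying $\clm^{-1}$, through its $u_2^+$ slot. The derivatives of $I$ are already recorded in the example. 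I would then substitute into
\[
  [J,J]=2\int\Bigl(\dtafrac{J}{\theta_1}\dtafrac{J}{u_1}+\dtafrac{J}{\theta_2}\dtafrac{J}{u_2}\Bigr)\nd x
\]
and
\[
  [I,J]=\int\Bigl(\dtafrac{I}{\theta_\al}\dtafrac{J}{u^\al}+\dtafrac{I}{u^\al}\dtafrac{J}{\theta_\al}\Bigr)\nd x
\]
(with sum over $\al$), expand each integrand into cubic monomials $\theta_a^{(r)}\theta_b^{(s)}\theta_c^{(t)}$ with coefficients polynomial in $u_1,u_2$ and their shifts, and normalise every such monomial to one fixed representative of its $\clm$-orbit by repeatedly using $\int\clm^{\pm1}(g)\nd x=\int g\nd x$. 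Collecting by monomial type and using $\theta_\al\theta_\al=0$ together with the anticommutativity of the $\theta$'s — the same manipulations, now including a few three-term ``Jacobi'' combinations, that produced the cancellation in the $[I,I]$ computation — each group is seen to vanish.

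The work is entirely in the bookkeeping; $[I,J]$ is the heavier case, because it mixes the $\clm^{-1}$-only structure of $\mcalP_1$ with the $(\clm-\clm^{-1})$ block of $\mcalP_2$ and therefore produces cubic monomials carrying two-step shifts, so one must be disciplined about the choice of orbit representative or spurious residual terms survive. (The substitution $v^1=Q-P$, $v^2=\log Q$ that trivialised $\mcalP_1$ does not make $\mcalP_2$ constant-coefficient, so it offers no real shortcut here.) As a guard against sign errors it is worth checking both identities first at the dispersionless level, i.e.\ for the leading symbols obtained by letting $\clm\to1$ after the $\ve$-interpolation, where they reduce to the compatibility of the hydrodynamic-type leading terms of $\mcalP_1$ and $\mcalP_2$ — a short computation that is in any case needed in Section~5. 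Once the monomials are grouped and the summation-by-parts rule is applied, both brackets vanish, proving that $\mcalP_2$ is Hamiltonian and compatible with $\mcalP_1$.
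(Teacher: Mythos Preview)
Your proposal is correct and follows essentially the same approach as the paper: both reduce the statement to verifying $[J,J]=0$ and $[I,J]=0$ via the super-variable Schouten-bracket formalism, compute the variational derivatives of $J$, substitute, and collapse the resulting cubic-in-$\theta$ expressions using anticommutativity and integration by parts in $\clm$. The paper simply carries out in full the monomial-by-monomial computation you outline, obtaining for $[I,J]$ an integrand of the form $(\Lmd-1)(\cdots)$.
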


\begin{proof}
  We only need to check $[J,J]=0$ and $[I,J]=0$.
The variational derivatives of $J$ are given by
\begin{align*}
  \dtafrac{J}{u_1}
&=
u_2^+\theta_1\theta_2^+-{u_2}\theta_1\theta_2
\\
&=\theta_1(\clm-1)({u_2}\theta_2),
\\
  \dtafrac{J}{{u_2}}
&=
  -{u_1}\theta_1\theta_2+{u_1}^-\theta_1^-\theta_2
  +u_2^-\theta_2^-\theta_2
  +u_2^+\theta_2\theta_2^+
\\
&=\theta_2(1-\clm^{-1})({u_1}\theta_1)
  +\theta_2(\clm-\clm^{-1})({u_2}\theta_2),
\\
  \dtafrac{J}{\theta_1}
&=
  {u_1}(\clm-1)({u_2}\theta_2),
\\
  \dtafrac{J}{\theta_2}
&=
  {u_2}(1-\clm^{-1})({u_1}\theta_1)
 +{u_2}(\clm-\clm^{-1})({u_2}\theta_2).
\end{align*}Then we have
\begin{align*}
[J,J]&=
  2\int\left(
    \dtafrac J{\theta_1}\dtafrac J{u_1}
   +\dtafrac J{\theta_2}\dtafrac J{u_2}
  \right)\nd x\\
&=
  2\int\Big(
  {u_1}(\clm-1)({u_2}\theta_2)\cdot\theta_1(\clm-1)({u_2}\theta_2)\\
&\quad  +\bigl({u_2}(1-\clm^{-1})({u_1}\theta_1)
  +{u_2}(\clm-\clm^{-1})({u_2}\theta_2)\bigr)\cdot\\
  &\qquad\qquad  \big(\theta_2(1-\clm^{-1})({u_1}\theta_1)
  +\theta_2(\clm-\clm^{-1})({u_2}\theta_2)\big)\Big)\nd x\\
&=
  2\int
    \Big(
      {u_2}(1-\clm^{-1})({u_1}\theta_1)\cdot
      \theta_2(\clm-\clm^{-1})({u_2}\theta_2)\\
   & \quad  +{u_2}(\clm-\clm^{-1})({u_2}\theta_2)\cdot\theta_2(1-\clm^{-1})({u_1}\theta_1)
    \Big)\nd x\\
&=0.
\end{align*}
Hence $\mcalP_2$ is a Hamiltonian operator.

Now we compute $[I,J]$.
This calculation is straightforward, but a little complicated. We have
\begin{align*}
  [I,J]
&=
  \int\left(
    \dtafrac{I}{\theta_1}
    \dtafrac{J}{u_1}
   +\dtafrac{I}{\theta_2}
    \dtafrac{J}{u_2}
   +\dtafrac{I}{u_1}
    \dtafrac{J}{\theta_1}
   +\dtafrac{I}{u_2}
    \dtafrac{J}{\theta_2}
  \right)\nd x\\
&=
 \int\Big[
   \big(({u_2}\clm^{-1}-\clm {u_2})\theta_1+(1-\clm)({u_2}\theta_2)\big)\cdot
   \theta_1(\clm-1)({u_2}\theta_2)\\
&\quad+
   {u_2}(\clm^{-1}-1)\theta_1\cdot
   \big(\theta_2(1-\clm^{-1})({u_1}\theta_1)
  +\theta_2(\clm-\clm^{-1})({u_2}\theta_2)\big)\\
&\quad +
  (\theta_1\theta_1^--\theta_1^-\theta_2+\theta_1\theta_2)\cdot
  \big({u_2}(1-\clm^{-1})({u_1}\theta_1)
 +{u_2}(\clm-\clm^{-1})({u_2}\theta_2)\big)
 \Big]\nd x\\
&=
  \int\Big[
    ({u_2}\theta_1^--u_2^+\theta_1^+)\theta_1(u_2^+\theta_2^+-{u_2}\theta_2)\\
&\quad  +{u_2}(\theta_1^--\theta_1)\theta_2({u_1}\theta_1-u_1^-\theta_1^-+u_2^+\theta_2^+-u_2^-\theta_2^-)\\
&\quad  +{u_2}(\theta_1\theta_1^--\theta_1^-\theta_2+\theta_1\theta_2)\cdot
     ({u_1}\theta_1-u_1^-\theta_1^-+u_2^+\theta_2^+-u_2^-\theta_2^-)
  \Big]\nd x\\
&=
\int\Big[
  \left(
    u_2^+u_2^+\theta_1\theta_1^+\theta_2^+
   -u_2{u_2}\theta_1^-\theta_1\theta_2
  \right)
 +\left(
    u_2^-u_2\theta_1^-\theta_1\theta_2^--{u_2}u_2^+\theta_1\theta_1^+\theta_2
  \right)
\Big]\td x\\
&=
  \int(\Lmd-1)
  \left(
    {u_2}{u_2}\theta_1^-\theta_1\theta_2
   -u_2^-{u_2}\theta_1^-\theta_1\theta_2^-
  \right)\td x\\
&=0.
\end{align*}
Hence $\mcalP_1, \mcalP_2$ are compatible.
\end{proof}

As for the third operator $\mcalP_3$ defined in \eqref{operator-K},
we need to verify that the corresponding local functional $K$ given in \eqref{K} satisfies the relations
\beq \label{triHamil-K}
[K,I]=[K,J]=[K,K]=0.
\eeq

\begin{lem}
  The relations given in \eqref{triHamil-K} hold true, i.e.
  the third operator $\mcalP_3$
  in \eqref{operator-K} is a Hamiltonian operator,
  and compatible with $\mcalP_1, \mcalP_2$.
\end{lem}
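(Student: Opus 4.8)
The plan is to establish the three relations in \eqref{triHamil-K} without ever expanding a Schouten bracket that involves $K$ directly. A head-on verification is possible in principle, but since the density of $K$ in \eqref{K} already contains shifts of $u^\alpha$ and $\theta_\alpha$ up to order three, the brackets $[K,I]$, $[K,J]$ and especially $[K,K]$ would be enormous (though purely mechanical) computations. Instead I would exploit the structural identity $\mathcal{P}_3=\mathcal{R}\circ\mathcal{P}_2=\mathcal{P}_2\circ\mathcal{P}_1^{-1}\circ\mathcal{P}_2=\mathcal{R}^{2}\circ\mathcal{P}_1$, which is immediate from \eqref{recur-op} and \eqref{tpo-3}, together with what has already been proved: $\mathcal{P}_1$ is a Hamiltonian operator, $\mathcal{P}_2$ is a Hamiltonian operator, and $\mathcal{P}_1,\mathcal{P}_2$ are compatible.

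The first step is to record that $\mathcal{P}_1$ is non-degenerate. The change of dependent variables \eqref{vpq} brings $\mathcal{P}_1$ to the constant-coefficient form $\left(\begin{smallmatrix}0 & \clm-1\\ 1-\clm^{-1}&0\end{smallmatrix}\right)$, whose formal inverse $\left(\begin{smallmatrix}0 & -(1-\clm^{-1})^{-1}\\ (\clm-1)^{-1}&0\end{smallmatrix}\right)$ is a well-defined pseudo-difference operator; this is exactly the inverse already implicit in the factor $(1-\clm)^{-1}$ appearing in the explicit expression for $\mathcal{R}$ written after \eqref{recur-op}. With $\mathcal{P}_1^{-1}$ available, skew-symmetry of $\mathcal{P}_3$ is automatic, since $\mathcal{P}_3^{*}=(-\mathcal{P}_2)(-\mathcal{P}_1^{-1})(-\mathcal{P}_2)=-\mathcal{P}_3$, so $\mathcal{P}_3$ is antisymmetric (as is also visible from \eqref{operator-K}) and the local functional $K$ of \eqref{K} is well-defined. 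The second step is to invoke the classical recursion principle for bihamiltonian structures (see, e.g., \cite{LZ-2011, LZ-2013} and the references therein): when $\mathcal{P}_1$ is a non-degenerate Hamiltonian operator and $\mathcal{P}_2$ is a Hamiltonian operator compatible with $\mathcal{P}_1$, the compatibility of the pencil $\mathcal{P}_2-\lambda\mathcal{P}_1$ forces the recursion operator $\mathcal{R}=\mathcal{P}_2\circ\mathcal{P}_1^{-1}$ to be hereditary (its Nijenhuis torsion vanishes); consequently every $\mathcal{R}^{k}\circ\mathcal{P}_1$, $k\ge 0$, is a Hamiltonian operator and the whole family $\{\mathcal{R}^{k}\circ\mathcal{P}_1\}_{k\ge 0}$ consists of pairwise compatible Hamiltonian operators. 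Taking $k=0,1,2$ yields $\mathcal{P}_1,\mathcal{P}_2,\mathcal{P}_3$, so $\mathcal{P}_3$ is a Hamiltonian operator compatible with both $\mathcal{P}_1$ and $\mathcal{P}_2$; in the super-variable language this is precisely $[K,K]=[K,I]=[K,J]=0$, which is \eqref{triHamil-K}.

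The point that needs care — and the only genuine obstacle along this route — is that the recursion principle is usually stated for (pseudo-)differential Hamiltonian operators, whereas here the operators are (pseudo-)difference operators written in terms of $\clm$. I would deal with this by noting that the proof of the principle is entirely algebraic: it uses only skew-symmetry, associativity of composition of the relevant operators, and the graded Leibniz and Jacobi identities for the Schouten bracket on $\hat{\mathcal{F}}$, all of which hold verbatim in the discrete setting once $\clm=e^{\ve\p_x}$ and the formalism of Section 4 are in force, so the argument transfers word for word. If one prefers to remain strictly inside the explicit conventions used for the preceding two lemmas, the fallback is to check \eqref{triHamil-K} by the direct super-variable computation of $[I,K]$, $[J,K]$ and $[K,K]$ — considerably longer, but made unnecessary by the structural argument above.
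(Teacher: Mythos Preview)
Your argument is correct and takes a genuinely different route from the paper. The paper proves the lemma by brute force: it writes out all the variational derivatives $\delta K/\delta u_\alpha$ and $\delta K/\delta\theta_\alpha$ explicitly, substitutes them into $[K,I]$ (and asserts the same for $[K,J]$, $[K,K]$), and collapses the result into a total $(\Lambda-1)$-divergence term by term. You instead exploit the structural identity $\mathcal{P}_3=\mathcal{R}^2\mathcal{P}_1$ together with the Magri--Fuchssteiner--Fokas/Oevel mechanism: once $(\mathcal{P}_1,\mathcal{P}_2)$ is a bihamiltonian pair with $\mathcal{P}_1$ non-degenerate, the recursion operator $\mathcal{R}=\mathcal{P}_2\mathcal{P}_1^{-1}$ is Nijenhuis and the whole tower $\{\mathcal{R}^k\mathcal{P}_1\}_{k\ge 0}$ consists of pairwise compatible Hamiltonian operators, so the case $k=2$ delivers \eqref{triHamil-K} for free.

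What your route buys is conceptual economy and an immediate generalization to all $\mathcal{R}^k\mathcal{P}_1$; what the paper's route buys is complete self-containment within the local super-variable formalism set up in Section~4. Your argument imports a theorem whose standard proofs are phrased for (pseudo-)differential operators and whose Schouten-bracket formulation strictly requires extending the space $\hat{\mathcal{F}}$ to accommodate the non-local intermediary $\mathcal{P}_1^{-1}$; you correctly note that the relevant identities are purely algebraic and transfer verbatim once $\Lambda=e^{\varepsilon\partial_x}$, but a careful write-up would spell this out rather than assert it. One small point: the references \cite{LZ-2011, LZ-2013} develop the Schouten formalism used here but do not themselves contain the recursion principle you invoke; the correct attributions are Magri, Fuchssteiner--Fokas, and Gel'fand--Dorfman.
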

\begin{proof}
The variational derivatives of $K$ are given by
\begin{align*}
    \dtafrac{K}{u_1}
  &=
    -(u_1^-u_2\theta_1^-\theta_1 
    -u_1^+u_2^+\theta_1\theta_1^+)\\
  &\quad
    +(u_2 u_2\theta_1\theta_2
    +u_2^-u_2\theta_1\theta_2^-
    -u_2^+u_2^+\theta_1\theta_2^+
    -u_2^+u_2^{++}\theta_1\theta_2^{++})\\
  &\quad
    +2(u_1u_2^+\theta_1\theta_2^+
    -u_1u_2\theta_1\theta_2
    +u_2u_2^+\theta_2\theta_2^+), 
  \\
    \dtafrac{K}{u_2}
  &=
    -u_1^-u_1\theta_1^-\theta_1\\
  &\quad
    +(
      2u_1u_2\theta_1\theta_2
      +u_1u_2^-\theta_1\theta_2^-
      +u_1^+u_2^+\theta_1^+\theta_2
      -2u_1^-u_2\theta_1^-\theta_2
      -u_1^-u_2^+\theta_1^-\theta_2^+
      -u_1^{--}u_2^-\theta_1^{--}\theta_2
    )\\
  &\quad
    +(u_1^-u_1^-\theta_1^-\theta_2-u_1u_1\theta_1\theta_2)
    +2(u_1u_2^+\theta_2\theta_2^+
      +u_1^-u_2^-\theta_2^-\theta_2)\\
  &\quad
    -\big(
      u_2^+(2u_2+u_2^+)\theta_2\theta_2^+
     +u_2^-(u_2^-+2u_2)\theta_2^-\theta_2\\
  &\quad
     +u_2^+u_2^{++}\theta_2\theta_2^{++}
     -u_2^-u_2^+\theta_2^-\theta_2^+ 
     -u_2^{--}u_2^-\theta_2^{--}\theta_2 
    \big),
  \\
  \dtafrac{K}{\theta_1}
  &=
    (
      u_1^-u_1u_2\theta_1^-
     -u_1u_1^+u_2^+\theta_1^+ 
    )\\
  &\quad
    +(
     u_1u_2u_2\theta_2
    +u_1u_2^-u_2\theta_2^-
    -u_1u_2^+u_2^+\theta_2^+
    -u_1u_2^+u_2^{++}\theta_2^{++} 
    )\\
  &\quad
    +(
      u_1u_1u_2^+\theta_2^+
     -u_1u_1u_2\theta_2 
    ),
\\
  \dtafrac{K}{\theta_2}
&=
  (
    u_1^-u_2u_2\theta_1^-
   +u_1^{--}u_2^-u_2\theta_1^{--}
   -u_1u_2u_2\theta_1
   -u_1^+u_2u_2^+\theta_1^+ 
  )\\
&\quad
  +(u_1u_1u_2\theta_1
   -u_1^-u_1^-u_2\theta_1^-)
  +2(
    u_1u_2u_2^+\theta_2^+
   -u_1^-u_2^-u_2\theta_2^- 
  )\\
&\quad
  +\big(
    u_2^-u_2(u_2+u_2^-)\theta_2^-
   -u_2u_2^+(u_2+u_2^+)\theta_2^+
   +u_2^{--}u_2^-u_2\theta_2^{--}
   -u_2u_2^+u_2^{++}\theta_2^{++} 
  \big). 
\end{align*}  
Then it is a straightforward computation to verify the relations given in \eqref{triHamil-K}, for example, we have
\begin{align*}
  [K,I]
&=
  \int\left(
    \dtafrac{K}{\theta_1}
    \dtafrac{I}{u_1}
   +\dtafrac{K}{\theta_2}
    \dtafrac{I}{u_2}
   +\dtafrac{K}{u_1}
    \dtafrac{I}{\theta_1}
   +\dtafrac{K}{u_2}
    \dtafrac{I}{\theta_2}
  \right)\nd x\\
&=
  \int
  (1-\Lmd)
  \Big(
    u_1^{--}u_2^-u_2\theta_1^{--}\theta_1^-\theta_1       
   -u_1u_2u_2\theta_1^-\theta_1\theta_2\\
&\quad
   +2u_1^-u_2u_2\theta_1^-\theta_1\theta_2
   -2u_1^-u_2^-u_2\theta_1^-\theta_1\theta_2^-
   +u_1u_2u_2^-\theta_1\theta_1^-\theta_2^-
                                                     \\
&\quad
  +u_2^-u_2^-u_2\theta_1^-\theta_1\theta_2^-
  -u_2u_2u_2^+\theta_1^-\theta_1\theta_2^+
  +u_2^{--}u_2^-u_2\theta_1^-\theta_1\theta_2^{--}             
  -u_2u_2u_2\theta_1^-\theta_1\theta_2\\
&\quad
  +2u_2^-u_2u_2\theta_1\theta_2\theta_2^-
  -u_2u_2u_2^+\theta_1^-\theta_2\theta_2^+
  -u_2^-u_2u_2\theta_1^-\theta_2\theta_2^-
  +u_2^-u_2u_2^+\theta_1^-\theta_2^-\theta_2^+
  \Big)\td x\\
&=0.
\end{align*}
The lemma is proved.
\end{proof}

\section{Central invariants of the associated bihamiltonian structures}

In order to understand properties of the bihamiltonian integrable hierarchies associated to Gromov-Witten invariants and 2d topological field theory,
Dubrovin and the fourth-named author of the present paper proposed a program of classification of deformations bihamiltonian structures of hydrodynamic type in \cite{DZ-2001}. Together with the second-named author of the present paper, they introduced in \cite{DLZ-2006, LZ-2005} the notion of central invariants for any deformation of a given semisimple bihamiltonian structures of hydrodynamic type, and proved that two deformations are equivalent under a certain Miura-type transformation if and only if they have the same set of central invariants. In particular, all the central invariants of the deformed bihamiltonian structure of the integrable hierarchy associated to the Gromov-Witten invariants of a target space must equal to a certain constant.

In this section, we compute the central invariants of the bihamiltonian structures
$(\mathcal{P}_i, \mathcal{P}_j)$ $(1\le i\ne j\le 3)$ of the Ablowitz-Ladik hierarchy, and we find that the central invariants of the bihamiltonian structure $(\mathcal{P}_1, \mathcal{P}_2)$ equal to $\frac1{24}$, and the ones of the bihamiltonian structure $(\mathcal{P}_3, \mathcal{P}_2)$ equal to $-\frac1{24}$. This result together with properties of the leading terms of these two bihamiltonian structures implies that they are related by a certain Miura-type transformation, and it also provides a support for the validity of Brini's conjecture on the relation of the Ablowitz-Ladik hierarchy and the Gromov-Witten invariants of local $\mathbb{P}^1$.

Let us first recall the definition of the central invariants of a bihamiltonian structure with semisimple hydrodynamic limit.
Suppose $(\mcalQ_{1},\mcalQ_{2})$ is a bihamiltonian structure defined on the jet space $J^{\infty} (M)$ of a smooth manifold $M$ of dimension $m$, the components of which have the forms
\begin{align*}
\mathcal{Q}^{\al\beta}_a&=\sum_{k\ge 0} \ve^k \mathcal{Q}^{\al\beta}_{a;k}\\
&=g_a^{\afa\beta}(u)\partial_x
  +\Gamma^{\afa\beta}_{a;\gamma}(u)u^{\gamma,1}\\
&\quad   +\ve\left(\mathcal{Q}^{\al\beta}_{a;1,2}(u) \p_x^2+\mathcal{Q}^{\al\beta}_{a;1,1}(u, u_x) \p_x+\mathcal{Q}^{\al\beta}_{a;1,0}(u, u_x, u_{xx})\right)\\
&\quad +\sum_{s\ge 2}\ve^s \sum_{k=0}^{s+1} \mathcal{Q}^{\al\beta}_{a;s,k}(u, \dots, \p^{s+1-k}u) \p_x^k,\quad a=1,2,
\end{align*}
where $(u^1,\dots, u^m)$ is a local coordinate system of $M$, and $\mathcal{Q}^{\al\beta}_{a;s, k}$ are homogeneous differential polynomials of degrees $s+1-k$ with respect to the differential degree defined by $\deg\p_x^l u^j=l$.
We say that $(\mathcal{Q}^{\al\beta}_1, \mathcal{Q}^{\al\beta}_2)$ has a semisimple hydrodynamic limit, if $(g^{\al\beta})$ is non-degenerate and the characteristic equation
\[
  \det\left(
   g^{\afa\beta}_2(u)
  -\lmd g_1^{\afa\beta}(u)
  \right)=0
\]
has $n$ non-constant and distinct roots
\[\lmd^1(u), ...,\lmd^n(u).\]
These functions give a system of local coordinates of $M$, and we call them the canonical coordinates. In the canonical coordinates, the functions $g_1^{ij}$, $g_2^{ij}$ have the forms
\[g_1^{ij}(\lm)=\delta^{ij} f^i, \quad g_2^{ij}(\lm)=\lm_i\delta^{ij} f^i.\]
Let us still use $\mathcal{Q}^{ij}_{a;s,k}$ to denote the coefficients of $\mathcal{Q}_a^{ij}$ in canonical coordinates, then the central invariants of
$(\mcalQ_1,\mcalQ_2)$ are defined by
\begin{equation}\label{defn-of-cen-inv}
  c_i(\lmd)
 =\frac{1}{3(f^i)^2}
  \left(
    \mcalQ^{ii}_{2;2,3}
   -\lmd^i\mcalQ^{ii}_{1;2,3}
   +\sum_{k\neq i}
    \frac{(\mcalQ^{ki}_{2;1,2}-\lmd^i\mcalQ^{ki}_{1;1,2})^2}
         {f^k(\lmd^k-\lmd^i)}
  \right).
\end{equation}
In fact, the $i$-th central invariant $c_i(\lmd)$ depends only on $\lmd^i$,
see \cite{DLZ-2006}.


Now we consider the central invariants of the
bihamiltonian structure $(\mcalP_1, \mcalP_2)$ of the Albowitz-Ladik hierarchy
defined by \eqref{tpo-1-x} and \eqref{tpo-2-x}.
In the coordinates $u^1=P, u^2=Q$, the Hamiltonian operators can be represented, after the rescaling $\mathcal{P}_a\to \frac1{\ve} \mathcal{P}_a$, as follows:
\begin{align*}
\mcalP_1
&=
  \mcalP_{1;0}
 +\ve
  \begin{pmatrix}
    -\frac12 u^2_{xx}-u^2_x\p_x & -A^* \vsp\\
    A& 0
  \end{pmatrix}
 +\ve^2
  \begin{pmatrix}
    B
  & -C^*  \\
    C & 0
  \end{pmatrix}
 +O(\ve^3),
\\
\mcalP_2
&=
  \mcalP_{2;0}
 +\ve
  \begin{pmatrix}
    0 & -D^* \\
    D
    & 0
  \end{pmatrix}  +\ve^2
  \begin{pmatrix}
    0
  & -E^* \\
    E
  & F
  \end{pmatrix}
 +O(\ve^3),
\end{align*}
where
\begin{equation}\label{leadingtm}
  \mcalP_{1;0}
 =\begin{pmatrix}
    -\p_x u^2-u^2\p_x & -\p_x u^2 \vsp\\
    -u^2\p_x & 0
  \end{pmatrix},\quad
  \mcalP_{2;0}
 =\begin{pmatrix}
    0 & u^1\p_x u^2 \vsp\\
    u^2\p_x u^1 & 2u^2\p_x u^2
  \end{pmatrix},
\end{equation}
and the operators $A, B, C, D, E, F$ are given by
\begin{align*}
A&=\frac12u^2\p_x^2,\quad
B=-\frac16 u^2_{xxx}-\frac{u^2_{xx}}{2}\p_x-\frac12 u^2_x\p_x^2-\frac13 u^2\p_x^3,\\
C&=-\frac16 u^2\p_x^3,\quad
D= -\frac12 u^1_{xx}u^2 -u^1_xu^2\p_x-\frac12 u^1u^2\p_x^2,\\
E&= \frac16 u^1_{xxx}u^2+\frac12 u^1_{xx}u^2\p_x+\frac12 u^1_xu^2\p_x^2+\frac16 u^1u^2\p_x^3,\\
F&=\frac13 u^2u^2_{xxx}+u^2u^2_{xx}\p_x+u^2u^2_x\p_x^2+\frac13 (u^2)^2\p_x^3.
\end{align*}
The canonical coordinates $(\lmd^1,\lmd^2)$
of $(\mcalP_{0,1},\mcalP_{0,2})$ are given by
\begin{eqnarray*}
  \lmd^1&=&u^1-2u^2+2\sqrt{(u^2)^2-u^1u^2}, \\
  \lmd^2&=&u^1-2u^2-2\sqrt{(u^2)^2-u^1u^2},
\end{eqnarray*}
and the Jacobian between the coordinate systems $(u^1,u^2)$ and $(\lmd^1, \lmd^2)$ is given by
\[
  \mathcal J
 =\begin{pmatrix}
    \frac{\p\lmd^1}{\p u^1} & \frac{\p\lmd^1}{\p u^2} \vsp\\
    \frac{\p\lmd^2}{\p u^1} & \frac{\p\lmd^2}{\p u^2}
  \end{pmatrix}
 =
  \begin{pmatrix}
    -1-\frac{u^2}{\sqrt{(u^2)^2-u^1u^2}} & 2+\frac{-u^1+2u^2}{\sqrt{(u^2)^2-u^1u^2}} \vsp\\
    -1+\frac{u^2}{\sqrt{(u^2)^2-u^1u^2}} & 2-\frac{-u^1+2u^2}{\sqrt{(u^2)^2-u^1u^2}}
  \end{pmatrix}.
\]
Thus in canonical coordinate $(\lmd^1,\lmd^2)$ we have
\[
  \begin{pmatrix}
    f^1 & 0 \\
    0 & f^2
  \end{pmatrix}
 =\mcalJ\cdot \begin{pmatrix}-2 u^2&- u^2\\ -u^2&0\end{pmatrix}\cdot\mcalJ^T,\]
  and
\begin{align*}
\mcalP_{1;1,2}
&=
  \mcalJ
  \begin{pmatrix}
    0 & -\frac12 u^2 \vsp\\
    \frac12 u^2 & 0
  \end{pmatrix}\mcalJ^T,\quad
\mcalP_{2;1,2}
=
 \mcalJ
  \begin{pmatrix}
    0 & \frac12 u^1u^2 \vsp\\
   -\frac12 u^1u^2 & 0
  \end{pmatrix}\mcalJ^T\\ 
\mcalP_{1; 2,3}&=
\mcalJ
  \begin{pmatrix}
   - \frac13 u^2 & -\frac16 u^2 \vsp\\
    -\frac16 u^2 & 0
  \end{pmatrix}\mcalJ^T,\quad
\mcalP_{2;2,3}=
\mcalJ
  \begin{pmatrix}
    0 & \frac16 u^1u^2 \vsp\\
    \frac16 u^1u^2 & \frac13 (u^2)^2
  \end{pmatrix}\mcalJ^T.
\end{align*}
Then it follows from the formula \eqref{defn-of-cen-inv} that
the central invariants of bihamiltonian structure $(\mcalP_1,\mcalP_2)$ are given by
\[
  c_1=c_2=\frac{1}{24}.
\]

In the same way, we can compute the central invariants of the other bihamiltonian structures that come from the tri-Hamiltonian structure $(\mathcal{P}_1, \mathcal{P}_2, \mathcal{P}_3)$, and we obtain the following theorem.
\begin{thm}
The central invariants of the bihamiltonian structures $(\mathcal{P}_i, \mathcal{P}_j)$ of the Ablowitz-Ladik hierarchy are given by the following table:
\begin{table}[htbp]
\centering
\begin{tabular}{ccc}
\toprule
 &\quad\qquad  $c_1$ & \quad\qquad  $c_2$\\
\midrule
$(\mcalP_1, \mcalP_2)$ & \quad\qquad  $\frac1{24}$  &\quad\qquad  $\frac1{24}$ \\ [0pt]
\midrule
$(\mcalP_2, \mcalP_1)$ & \quad\qquad  $-\frac1{24\lmd^1}$ &\quad\qquad   $-\frac1{24\lmd^2}$\\ [0pt]
\midrule
$(\mcalP_1, \mcalP_3)$ &  \quad\qquad  $-\frac1{48\sqrt{\lmd^1}}$ & \quad\qquad  $-\frac1{48\sqrt{\lmd^2}}$\\ [0pt]
\midrule
$(\mcalP_3, \mcalP_1)$ &   \quad\qquad  $-\frac1{48\sqrt{\lmd^1}}$ &\quad\qquad  $-\frac1{48\sqrt{\lmd^2}}$\\ [0pt]
\midrule
$(\mcalP_2, \mcalP_3)$ &   \quad\qquad   $\frac1{24\lmd^1}$ &\quad\qquad  $\frac1{24\lmd^2}$\\ [0pt]
\midrule
$(\mcalP_3, \mcalP_2)$ &   \quad\qquad  $-\frac1{24}$ &\quad\qquad  $-\frac1{24}$\\
\bottomrule
\end{tabular}
\end{table}
\end{thm}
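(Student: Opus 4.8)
The plan is to compute the central invariants for each of the six ordered pairs $(\mathcal{P}_i,\mathcal{P}_j)$ by the same procedure already carried out in detail for $(\mathcal{P}_1,\mathcal{P}_2)$ in the text preceding the theorem, so the proof is really a bookkeeping exercise organized around the defining formula \eqref{defn-of-cen-inv}. First I would record, for each pair, the dispersionless (leading) parts $\mathcal{Q}_{1;0},\mathcal{Q}_{2;0}$ obtained from \eqref{leadingtm} and from the analogous leading term of $\mathcal{P}_3$, which is read off \eqref{operator-K} under $P\to u^1,\ Q\to u^2,\ \clm\to e^{\ve\p_x}$; its leading coefficient matrix is $\begin{pmatrix} u^1(u^2\p_x+\p_x u^2)u^1 & -(u^1 u^2\p_x u^2+\dots)\\ \dots & K_{22;0}\end{pmatrix}$, to be normalized to the standard Dubrovin--Novikov form $g_3^{\al\beta}\p_x+\Gamma_{3;\gamma}^{\al\beta}u^{\gamma,1}$. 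Then I would solve the characteristic equation $\det(g_j^{\al\beta}-\lmd g_i^{\al\beta})=0$ for that pair to get its canonical coordinates. A key simplifying observation is that all three leading metrics are simultaneously diagonalized: in the canonical coordinates $(\lmd^1,\lmd^2)$ already found for $(\mathcal{P}_1,\mathcal{P}_2)$ one has $g_1\propto\mathrm{diag}(f^1,f^2)$, $g_2\propto\mathrm{diag}(\lmd^1 f^1,\lmd^2 f^2)$, and — because $\mathcal{P}_3=\mathcal{R}\mathcal{P}_2=\mathcal{R}^2\mathcal{P}_1$ with $\mathcal{R}$ acting diagonally at leading order with eigenvalues $\lmd^1,\lmd^2$ — one gets $g_3\propto\mathrm{diag}((\lmd^1)^2 f^1,(\lmd^2)^2 f^2)$. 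Hence the canonical coordinates of $(\mathcal{P}_i,\mathcal{P}_j)$ are universally $(\lmd^1,\lmd^2)$ (or a monotone reparametrization thereof), and the pencil eigenvalue for the ordered pair $(\mathcal{P}_i,\mathcal{P}_j)$ at the point with $(\mathcal{P}_1,\mathcal{P}_2)$-eigenvalue $\lmd^k$ is $(\lmd^k)^{i-1}/(\lmd^k)^{j-1}=(\lmd^k)^{i-j}$.

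Next I would use the known $\ve^1$ and $\ve^2$ terms of $\mathcal{P}_1,\mathcal{P}_2$ listed via $A,B,C,D,E,F$, and derive the corresponding expansions of $\mathcal{P}_3=\mathcal{R}\mathcal{P}_2$ by expanding $e^{\pm\ve\p_x}=1\pm\ve\p_x+\tfrac{\ve^2}{2}\p_x^2+\dots$ in \eqref{operator-K}; this produces the operators analogous to $A,\dots,F$ for $\mathcal{P}_3$. For each ordered pair I then conjugate by the Jacobian $\mathcal{J}$ (or its appropriate analogue if a reparametrization of $\lmd$ is used) exactly as in the displayed computation of $\mathcal{P}_{1;1,2},\mathcal{P}_{2;1,2},\mathcal{P}_{1;2,3},\mathcal{P}_{2;2,3}$, extract the diagonal entries $\mathcal{Q}^{ii}_{a;2,3}$ and the off-diagonal entries $\mathcal{Q}^{ki}_{a;1,2}$, and substitute into \eqref{defn-of-cen-inv}. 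For the pairs obtained by swapping the order, $(\mathcal{P}_j,\mathcal{P}_i)$, there is a shortcut: there is a standard transformation law for central invariants under exchanging the two Poisson structures of a pencil, $c_i^{(j,i)}(\lmd)=c_i^{(i,j)}(\mu)/(\text{Jacobian factor})$ with $\mu=1/\lmd$, which accounts for the reciprocal-and-rescaled entries $-\tfrac1{24\lmd^k}$ versus $\tfrac1{24}$ and $\tfrac1{24\lmd^k}$ versus $-\tfrac1{24}$ in the table; I would either invoke this law or simply redo the short computation directly.

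The main obstacle I anticipate is purely computational: correctly expanding $\mathcal{P}_3$ from \eqref{operator-K} to order $\ve^2$ — the operator $K_{22}$ in particular is a long expression with factors up to $\clm^{\pm 2}$, so getting the $\p_x^2$ and $\p_x^3$ coefficients (hence $\mathcal{Q}_{3;1,2}$ and $\mathcal{Q}_{3;2,3}$) right requires care, and one must verify antisymmetry of each leading operator and the homogeneity degrees before plugging into \eqref{defn-of-cen-inv}. A secondary subtlety is the choice of canonical coordinate for the pairs involving $\mathcal{P}_3$: since the pencil eigenvalues are $(\lmd^k)^{\pm 1}$ for $(\mathcal{P}_1,\mathcal{P}_3)$/$(\mathcal{P}_3,\mathcal{P}_1)$, the natural canonical coordinate is $\sqrt{\lmd^k}$ up to sign, which explains the $\sqrt{\lmd^k}$ appearing in those rows of the table; one must track the $(f^i)^2$ normalization consistently under this square-root reparametrization. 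Once these expansions are in hand, each entry of the table is a one-line substitution, and the values $\pm\tfrac1{24}$, $\pm\tfrac1{24\lmd^k}$, $-\tfrac1{48\sqrt{\lmd^k}}$ follow. Finally I would remark, as the text does, that $c_i=-\tfrac1{24}$ for $(\mathcal{P}_3,\mathcal{P}_2)$ together with the matching leading terms forces $(\mathcal{P}_3,\mathcal{P}_2)$ to be Miura-equivalent to $(\mathcal{P}_1,\mathcal{P}_2)$, consistent with $\mathcal{P}_3=\mathcal{R}^2\mathcal{P}_1$ and the reflection $\lmd\mapsto\lmd$ fixing the normalized central invariant.
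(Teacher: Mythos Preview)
Your proposal follows essentially the same approach as the paper, which simply states ``In the same way, we can compute the central invariants of the other bihamiltonian structures'' and then records the table; your organizational observations (simultaneous diagonalization of the three leading metrics via $g_a\sim\mathrm{diag}((\lmd^k)^{a-1}f^k)$, and the swap law for reversed pairs) are reasonable elaborations of that direct computation. A couple of bookkeeping slips to correct when you carry it out: the leading $(1,1)$-entry of $\mathcal{P}_3$ is $-u^1(u^2\p_x+\p_x u^2)u^1$ (mind the sign from $Q\clm^{-1}-\clm Q$), and the pencil eigenvalue for the ordered pair $(\mathcal{P}_i,\mathcal{P}_j)$ is $(\lmd^k)^{j-i}$, so for $(\mathcal{P}_1,\mathcal{P}_3)$ and $(\mathcal{P}_3,\mathcal{P}_1)$ one gets $(\lmd^k)^{\pm 2}$, not $(\lmd^k)^{\pm 1}$---which is exactly why $\sqrt{\lmd^k}$ appears once you re-express $c_i$ as a function of that pair's own canonical coordinate.
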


We will consider the relation between the bihamiltonian structures $(\mathcal{P}_1, \mathcal{P}_2)$ and $(\mathcal{P}_3, \mathcal{P}_2)$ in Sec.\,\ref{sec-7}.

\section{Relation between the positive and negative flows}
In this section, we establish a relation between the positive and negative flows of the Ablowitz-Ladik hierarchy. To this end, let us first note
that the Lax equations \eqref{al-def}, \eqref{al-def-n}
are the compatibility conditions of the linear system \eqref{sp-2}
with the following evolutions of the eigenfunction:
\begin{align}
\phi_{t_k}&=\frac1{(k+1)!} (L^{k+1})_+\phi=a_k(z) \phi+b_k(z)\phi^-,\label{evo-phi-1}\\
\phi_{s_k}&=\frac1{(k+1)!} (M^{k+1})_-\phi=c_k(z^{-1}) \phi^++d_k(z^{-1})\phi^-.\label{evo-phi-2}
\end{align}
Here $a_k(z), c_k(z)$ are polynomials of $z$ of order $k+1$, and $b_k(z), d_k(z)$ are polynomials of $z$ of order $k$. These polynomials are determined, up to some integral constants, by the compatibility conditions between \eqref{sp-2} and \eqref{evo-phi-1}, \eqref{evo-phi-2}.

We now perform a gauge transformation $\psi=\rho \phi$
to the spectral problem \eqref{sp-2}, where the function $\rho$ is
defined by the relation
\begin{equation*}
\rho P=\rho^- Q.
\end{equation*}
Then the spectral problem \eqref{sp-2} is transformed to
\begin{equation}
\hat{L}\psi=z^{-1}\psi,\label{sp-3}
\end{equation}
where
\[\hat{L}=(1-\hat{Q}{\hat{\Lmd}}^{-1})^{-1}(\hat{\Lmd}-\hat{P}),
\quad \hat{\Lmd}=\Lmd^{-1},\]
and
\beq\label{transform}
\hat{P}=\frac1{P},\quad \hat{Q}=\frac{Q^+}{P P^+}.
\eeq
In term of the new eigenfunction $\psi$, the linear systems \eqref{evo-phi-1} and \eqref{evo-phi-2} can be represented as
\begin{align}
\psi_{t_k}&=(\frac{\rho_{t_k}}\rho+a_k(z))\psi+\frac{\rho}{\rho^-}\psi^-
=\hat{c}_k(z) \psi+\hat{d}_k(z)\psi^-,\label{evo-psi-1}\\
\psi_{s_k}&=(\frac{\rho_{s_k}}\rho+a_k(z))\psi+\frac{\rho}{\rho^-}\psi^-
=\hat{a}_k(z^{-1}) \psi+\hat{b}_k(z^{-1})\psi^-.\label{evo-psi-2}
\end{align}
From the compatibility condition of these linear systems with \eqref{sp-3} it follows that
\[\hat{a}_k(z)=a_k(z),\quad \hat{b}_k(z)=b_k(z),\quad \hat{c}_k(z)=c_k(z),\quad \hat{d}_k(z)=d_k(z).\]
Thus, if we denote the Ablowitz-Ladik hierarchy \eqref{al-def}, \eqref{al-def-n} in terms of the unknown functions $P, Q$ as follows
\begin{align*}
\frac{\p P}{\p t_k}&=T_{1,k}(P, Q, P^-, Q^-, P^+, Q^+,\dots),\\
\frac{\p Q}{\p t_k}&=T_{2,k}(P, Q, P^-, Q^-, P^+, Q^+,\dots),\\
\frac{\p P}{\p s_k}&=S_{1,k}(P, Q, P^-, Q^-, P^+, Q^+,\dots),\\
\frac{\p Q}{\p s_k}&=S_{2,k}(P, Q, P^-, Q^-, P^+, Q^+,\dots),
\end{align*}
then from the compatibility condition of \eqref{sp-3} with \eqref{evo-psi-1}, \eqref{evo-psi-2} that the Ablowitz-Ladik hierarchy \eqref{al-def}, \eqref{al-def-n} can also be represented in terms of the unknown functions $\hat{P}, \hat{Q}$ as follows:
\begin{align*}
\frac{\p \hat{P}}{\p t_k}&=S_{1,k}(\hat{P}, \hat{Q}, \hat{P}^+, \hat{Q}^+, \hat{P}^-, \hat{Q}^-,\dots),\\
\frac{\p \hat{Q}}{\p t_k}&=S_{2,k}(\hat{P}, \hat{Q}, \hat{P}^+, \hat{Q}^+, \hat{P}^-, \hat{Q}^-,\dots),\\
\frac{\p \hat{P}}{\p s_k}&=T_{1,k}(\hat{P}, \hat{Q}, \hat{P}^+, \hat{Q}^+, \hat{P}^-, \hat{Q}^-,\dots),\\
\frac{\p \hat{Q}}{\p s_k}&=T_{2,k}(\hat{P}, \hat{Q}, \hat{P}^+, \hat{Q}^+, \hat{P}^-, \hat{Q}^-,\dots).
\end{align*}
Thus, after the change of the unkown functions \eqref{transform} and the shift operator $\clm\to \clm^{-1}$, the positive and negative flows of the Ablowitz-Ladik hierarchies are interchanged. For example, the $\frac{\p}{\p t_0}$-flow of the Ablowitz-Ladik hierarchy \eqref{exm-1} can be written as
\[\hat{P}_{t_0}=\frac{\hat{Q}^-}{\hat{P}^-}-\frac{\hat{Q}}{\hat{P}^+},
\quad \hat{Q}_{t_0}=\frac{\hat{Q}}{\hat{P}}-\frac{\hat{Q}}{\hat{P}^+},
\]
which has the same form as that of the $\frac{\p}{\p s_0}$-flow given in \eqref{exm-5}.

We note that the original Ablowitz-Ladik equation corresponds to the following system of equations for the unknown functions $P, Q$:
\begin{align*}
\frac{\p P}{\p t}&=\frac{\p P}{\p t_0}+\frac{\p P}{\p s_0}=P(Q^+-Q)+\frac{Q^+}{P^+}-\frac{Q}{P^-},\\
\frac{\p Q}{\p t}&=\frac{\p Q}{\p t_0}+\frac{\p Q}{\p s_0}=Q (Q^+-Q^--P+P^-)+\frac{Q}{P}-\frac{Q}{P^-}.
\end{align*}
From the above-mentioned relation between the $\frac{\p}{\p t_0}$-flow and the $\frac{\p}{\p s_0}$-flow it follows that the original Ablowitz-Ladik equation has the following B\"acklund transformation:
\[P(n,t)\to \frac1{P(-n, t)},\quad Q(n, t)\to \frac{Q(-n+1,t)}{P(-n,t)P(-n+1,t)}.\]

\section{Relation between the bihamiltonian structures}\label{sec-7}
To see the relation between the bihamiltonian structures $({\mathcal{P}}_1, {\mathcal{P}}_2)$ and $({\mathcal{P}}_3, {\mathcal{P}}_2)$, let us represent them in terms of the unknown functions $\hat{P}, \hat{Q}$ defined by \eqref{transform}.

Denote
\[\mathcal{J}=\begin{pmatrix}
-\frac1{P^2} &0\vsp\\ -\frac{Q^+}{P^2 P^+}-\frac{Q^+}{P (P^+)^2}\Lmd& \frac1{P P^+}\Lmd
\end{pmatrix},\quad \mathcal{J}^*=\begin{pmatrix}
-\frac1{P^2} &-\frac{Q^+}{P^2 P^+}-\frac{Q}{P^2 P^-}\Lmd^{-1}\vsp\\ 0& \frac1{P P^-}\Lmd^{-1}
\end{pmatrix},\]
then by a straightforward computation we known that, in terms of $\hat{P}, \hat{Q}$, the Hamiltonian operator ${\mathcal{P}}_2$ has the expression
\begin{align*}
\hat{\mathcal{P}}_2&=\mathcal{J}\mathcal{P}_2 \mathcal{J}^*=
\begin{pmatrix}0 & \hat{P}(\clm^{-1}-1) \hat{Q}\vsp\\ \hat{Q}(1-\clm) \hat{P} & \hat{Q}(\clm^{-1}-\clm) \hat{Q}\end{pmatrix}\\
&=\left.\mathcal{P}_2\right|_{P\to \hat{P},\, Q\to\hat{Q}, \,\Lmd^\pm\to \Lmd^\mp},
\end{align*}
and the Hamiltonian operators ${\mathcal{P}}_1$,  ${\mathcal{P}}_3$ have the expressions
\begin{align*}
\hat{\mathcal{P}}_1&=\mathcal{J}\mathcal{P}_1 \mathcal{J}^*=-\left.\mathcal{P}_3\right|_{P\to \hat{P},\, Q\to\hat{Q}, \,\Lmd^\pm\to \Lmd^\mp},\\
\hat{\mathcal{P}}_3&=\mathcal{J}\mathcal{P}_3 \mathcal{J}^*=-\left.\mathcal{P}_1\right|_{P\to \hat{P},\, Q\to\hat{Q}, \,\Lmd^\pm\to \Lmd^\mp}.
\end{align*}
Thus the bihamiltonian structure $({\mathcal{P}}_1, {\mathcal{P}}_2)$ is equivalent to $(-{\mathcal{P}}_3, {\mathcal{P}}_2)$ under the transformation \eqref{transform} and the transformation $\ve\to -\ve$.

\section{Conclusion}
We give in this paper a tri-Hamiltonian structure of the Ablowitz-Ladik hierarchy, which yields in particular a local bihamiltonian structure with the dispersionless limit given in \cite{Brini-2}. This dispersionless limit is a bihamiltonian structure of hydrodynamic type defined on the jet space of the generalized 2-dimensional Frobenius manifold given by the potential
\[F=\frac12 (v^1)^2 v^2+v^1 e^{v^2}+\frac12 (v^1)^2\log v^1\]
and the flat metric
\[\eta=(\eta_{\al\beta})=\begin{pmatrix} 0 & 1\\ 1&0\end{pmatrix}.\]

The above Frobenius manifold has a non-constant unity vector field
\[e=\frac{v^1}{v^1-e^{v^2}}\frac{\p}{\p v^1}-\frac{1}{v^1-e^{v^2}}\frac{\p}{\p v^2},\]
and the Euler vector field
\[E=v^1\frac{\p}{\p v^1}+\frac{\p}{\p v^2}.\]
Its intersection form is given by
\[g=(g^{\al\beta})=\left(E^\gamma\eta^{\al\xi}\eta^{\beta\zeta}\frac{\p^3 F}{\p v^\gamma\p v^\xi\p v^\zeta}\right)=\begin{pmatrix}2 v^1 e^{v^2} & v^1+ e^{v^2}\vsp\\ v^1+ e^{v^2} & 2\end{pmatrix}.\]
The flat metric $\eta$ and the intersection form $g$ induce a bihamiltonian structure of hydrodynamic type $(\tilde{\mathcal{P}}_1, \tilde{\mathcal{P}}_2)$
with the Hamiltonian operators defined by
\[\tilde{\mathcal{P}}_1=\begin{pmatrix}0& \p_x\\ \p_x&0\end{pmatrix},\quad
\tilde{\mathcal{P}}_2=\begin{pmatrix}2 v^1 e^{v^2}\p_x+(v^1 e^{v^2})'& (v^1+e^{v^2})\p_x\vsp\\
(v^1+e^{v^2})\p_x+(v^1+e^{v^2})'& 2\p_x\end{pmatrix}.
\]
It coincides with the dispersionless limit $(\mathcal{P}_{1;0}, \mathcal{P}_{2;0})$ of the bihamiltonian structure $(\mathcal{P}_1, \mathcal{P}_2)$ of the Ablowitz-Ladik hierarchy given in \eqref{leadingtm} under the following change of the unkown functions \eqref{vpq}, i.e.
\[v^1=u^2-u^1,\quad v^2=\log u^2.\]
Under this change of the unkown functions, the dispersionless limits of the positive flows $\frac{\p}{\p t_k}$ $(k\ge 0)$ of the Ablowitz-Ladik hiearrchy also coincide with the following flows $\frac{\p}{\p t^{2,k}}$ of the Principal Hierarchy of the Frobenius manifold \cite{Brini-2}:
\[\frac{\p v^\al}{\p t^{2,k}}=\eta^{\al\beta}\frac{\p^2\theta_{2,k+1}}{\p v^\beta\p v^\gamma} v^\gamma_x,\quad k\ge 0,\]
where
\begin{align*}
&\theta_{2,0}=v^1,\quad \theta_{2,1}=v^1 e^{v^2}+\frac12 (v^1)^2,\quad \theta_{2,2}=\frac12 v^1 e^{2 v^2}+(v^1)^2 e^{v^2}+\frac16 (v^1)^3,
\end{align*}
and in general, we have
\begin{align*}
\theta_{2,k}&=\frac{(-1)^{k+1} (e^{v^2}-v^1)^{k+1}}{(k+1)!}\, {}_2F_1\left(-k-1,k+1;1;\frac{e^{v^2}}{e^{v^2}-v^1}\right)\\
&=\frac1{(k+1)!}\sum_{s=0}^{k+1}\binom{k+1}{s}\binom{k+s}{s} e^{s v^2} (v^1-e^{v^2})^{k+1-s}.
\end{align*}
Define the recursion operator
\[\tilde{\mathcal{R}}=\tilde{\mathcal{P}}_2\circ \tilde{\mathcal{P}}_1^{-1}
=\begin{pmatrix}v^1+e^{v^2}&2 v^1 e^{v^2}+(v^1 e^{v^2})'\p_x^{-1}\vsp\\
2& v^1+e^{v^2}+(v^1+e^{v^2})'\p_x^{-1}\end{pmatrix},\]
then these flows satisfy the recursion relation
\[\begin{pmatrix}\frac{\p v^1}{\p t^{2,k}} \vvsp\\ \frac{\p v^2}{\p t^{2,k}}\end{pmatrix}=\frac1{k+1}\tilde{\mathcal{R}} \begin{pmatrix}\frac{\p v^1}{\p t^{2,k-1}}\vvsp \\ \frac{\p v^2}{\p t^{2,k-1}}\end{pmatrix},\quad k\ge 1.\]

The dispersionless limit of the negative flows $\frac{\p}{\p s_k}$ $(s\ge 0)$
can be represented in the form
\[\frac{\p v^\al}{\p s_k}=\eta^{\al\beta}\frac{\p^2 h_{k}}{\p v^\beta\p v^\gamma} v^\gamma_x,\quad k\ge 0,\]
where
\[h_0=v^2-\log(v^1-e^{v^2}),\quad
h_k=\frac1{k(k+1)}\frac{\theta_{2,k-1}}{(v^1-e^{v^2})^{2k}}.
\]
These flows satisfy the recursion relation
\[\tilde{\mathcal{R}} \begin{pmatrix}\frac{\p v^1}{\p s_0} \vvsp\\ \frac{\p v^2}{\p s_0}\end{pmatrix}=0,\quad \begin{pmatrix}\frac{\p v^1}{\p s_{k-1}} \vvsp\\ \frac{\p v^2}{\p s_{k-1}}\end{pmatrix}=(k+1)\tilde{\mathcal{R}} \begin{pmatrix}\frac{\p v^1}{\p s_k} \vvsp\\ \frac{\p v^2}{\p s_k}\end{pmatrix},\quad k\ge 1.\]

We note that the flows $\frac{\p}{\p s_k}$ $(s\ge 0)$ do not belong to the Principal Hierarchy of the Frobenius manifold, which consists of the flows $\frac{\p}{\p t^{2,k}}$ and $\frac{\p}{\p t^{1,k}}$ $(k\ge 0)$. As it was shown in \cite{Brini-2},
the flows $\frac{\p}{\p t^{1,k}}$ are given by the equations
\[\frac{\p v^1}{\p t^{1,0}}=v^1_x+e^{v^2} v^2_x,\quad
\frac{\p v^2}{\p t^{1,0}}=\frac{v^1_x}{v^1}+v^2_x\]
and the following recursion relation:
\[\begin{pmatrix}\frac{\p v^1}{\p t^{1,k}} \vsp\\ \frac{\p v^2}{\p t^{1,k}}\end{pmatrix}=\frac1{k}\tilde{\mathcal{R}} \begin{pmatrix}\frac{\p v^1}{\p t^{1,k-1}} \vvsp\\ \frac{\p v^2}{\p t^{1,k-1}}\end{pmatrix}-\frac2{k}\begin{pmatrix}\frac{\p v^1}{\p t^{2,k-1}} \vvsp\\ \frac{\p v^2}{\p t^{2,k-1}}\end{pmatrix},\quad k\ge 1.\]
We will consider in subsequent publications a certain extension of the Ablowitz-Ladik hierarchy such that its dispersionless limit contains the whole Principal Hierarchy of the above-mentioned Frobenius manifold, and study Brini's conjecture on its relation with the Gromov-Witten invariants of local $\mathbb{CP}^1$.
\vskip 0.3truecm

\noindent{\bf Acknowledgements.}
This work is supported by NSFC No.\,11771238 and No.\,11725104.

\end{document}